\documentclass[a4paper,UKenglish]{lipics-v2018}

\usepackage{microtype}


\bibliographystyle{plainurl}

\newcommand{\eps}{\varepsilon}
\usepackage{floatrow}
\usepackage{float}
\usepackage{amsmath}
\usepackage{amsthm}
\usepackage{amssymb}
\usepackage{amsfonts}
\usepackage{color}
\usepackage{algorithm}
\usepackage{caption}
\usepackage{MnSymbol}
\usepackage{multirow}
\usepackage{hhline}

\def\polylog{\operatorname{polylog}}

\title{Improved Time and Space Bounds for Dynamic Range Mode}

\author{Hicham El-Zein}{Cheriton School of Computer Science, University of Waterloo, Canada}{helzein@uwaterloo.ca}{}{}
\author{Meng He}{Faculty of Computer Science, Dalhousie University, Canada}{mhe@cs.dal.ca}{}{}
\author{J. Ian Munro}{Cheriton School of Computer Science, University of Waterloo, Canada}{imunro@uwaterloo.ca}{}{}
\author{Bryce Sandlund}{Cheriton School of Computer Science, University of Waterloo, Canada}{bcsandlund@uwaterloo.ca}{}{}

\authorrunning{H. El-Zein, M. He, I. Munro, and B. Sandlund} 
\Copyright{Hicham El-Zein, Meng He, J. Ian Munro, Bryce Sandlund}
\subjclass{\ccsdesc[500]{Theory of computation~Data structures design and analysis}}
\keywords{dynamic data structures, range query, range mode, range least frequent, range $k$-frequency}

\category{}
\relatedversion{}
\supplement{}
\funding{}

\EventEditors{Yossi Azar, Hannah Bast, and Grzegorz Herman}
\EventNoEds{3}
\EventLongTitle{26th Annual European Symposium on Algorithms (ESA 2018)}
\EventShortTitle{ESA 2018}
\EventAcronym{ESA}
\EventYear{2018}
\EventDate{August 20--22, 2018}
\EventLocation{Helsinki, Finland}
\EventLogo{}
\SeriesVolume{112}
\ArticleNo{25} 
\nolinenumbers 

\begin{document}

\maketitle

\begin{abstract}
Given an array A of $n$ elements, 
we wish to support queries for the most frequent and least frequent element in a subrange $[l, r]$ of $A$. 
We also wish to support updates that change a particular element at index $i$ or insert/ delete an element at index $i$. 
For the range mode problem, our data structure supports all operations in $O(n^{2/3})$ deterministic time using only $O(n)$ space. 
This improves two results by Chan et al. \cite{C14}: 
a linear space data structure supporting update and query operations in $\tilde{O}(n^{3/4})$ time and an $O(n^{4/3})$ space data structure supporting update and query operations in $\tilde{O}(n^{2/3})$ time. 
For the range least frequent problem, we address two variations. 
In the first, we are allowed to answer with an element of $A$ that may not appear in the query range, and in the second, the returned element must be present in the query range. 
For the first variation, 
we develop a data structure that supports queries in $\tilde{O}(n^{2/3})$ time, 
updates in $O(n^{2/3})$ time, 
and occupies $O(n)$ space. 
For the second variation, we develop a Monte Carlo data structure that supports queries in $O(n^{2/3})$ time, updates in $\tilde{O}(n^{2/3})$ time, 
and occupies $\tilde{O}(n)$ space, but requires that updates are made independently of the results of previous queries. 
The Monte Carlo data structure is also capable of answering $k$-frequency queries; that is, the problem of finding an element of given frequency in the specified query range. 
Previously, no dynamic data structures were known for least frequent element or $k$-frequency queries.
\end{abstract}

\section{Introduction}
The mode of a sample is a fundamental data statistic along with median and mean. Given an ordered sequence, the range mode of interval $[l, r]$ is the mode of the subsequence from index $l$ to $r$. 
Building a data structure to efficiently compute range modes allows data analysis to be conducted over any window of one-dimensional data. 
Techniques to answer such queries are relevant to the design of database systems.

The range least frequent problem can be seen as a low-frequency variant of range mode. 
Instead of searching for the most frequent element in an interval of the sequence, 
we query for an element that occurs the fewest number of times. 
We may either restrict our attention to elements that occur at least once in the query range or allow the answer to be an element that occurs zero times in the interval but is present elsewhere in the sequence.

A third range frequency query we consider in this paper is the problem of identifying an element with given frequency $k$ in the specified query interval. 
This problem has been called the range $k$-frequency problem.

Both range mode query and range least frequent query have theoretical connections to matrix multiplication. 
In particular, the ability to answer $n$ range mode queries on an array of size $O(n)$ in faster than $O(n^{\omega/2})$ time, where $\omega$ is the constant in the exponent of the running time of matrix multiplication, 
would imply a faster algorithm for boolean matrix multiplication \cite{C14,C15}. Upon closer examination, this lower bound also applies to the range $k$-frequency problem.

The range mode, range least frequent, and range $k$-frequency problems are part of a set of questions one can ask on a subrange of a sequence. Other queries in this area include:
\begin{itemize}
  \item Sum: Return the sum of elements in the query range.
  \item Min/ Max: Return the minimum/ maximum in the query range.
  \item Median: Determine the median element in the query range.
  \item Majority: Return the element that occurs more than 1/2 the time, if such an element exists.
\end{itemize}
Note that the mean of a range can be reduced to the range sum problem. Table \ref{stateofart} gives an overview of known upper bounds on related static and dynamic range query data structures.

\begin{table}[!ht]
\begin{center}
\begin{tabular}{|c|c|c|c|c|}
\hline
Query Type & Query Time & Update Time & Space & Citation\\
\hline
\multirow{2}{*}{Sum} & $O(1)$ & - & $O(n)$ & trivial\\
\hhline{~----}
& $O(\lg n)$ & $O(\lg n)$ & $O(n)$ & \cite{mih10}\\
\hline
\multirow{2}{*}{Min/ Max} & $O(1)$ & - & $O(n)$ & \cite{GBT84}\\
\hhline{~----}
 & $O(\lg n / \lg\lg n)$ & $O(\lg^{1 / 2 + \eps} n)$ &
$O(n)$ & \cite{chan17}\\
\hline
\multirow{2}{*}{Median} & $O(\lg n / \lg\lg n)$ & - & $O(n)$ & \cite{bro11b}\\
\hhline{~----}
& $O((\lg n / \lg\lg n)^2)$ & $O((\lg n / \lg\lg n)^2)$ &
$O(n)$ & \cite{HMN11}\\
\hline
\multirow{2}{*}{Majority} & $O(1)$ & - & $O(n)$ & \cite{elm11}\\
\hhline{~----}
 & $O(\lg n / \lg\lg n)$ & $O(\lg n)$ &
$O(n)$ & \cite{elm11,gag17}\\
\hline
\multirow{6}{*}{Mode} & $O(n^\eps \lg\lg n)$ & - & $O(n^{2-2\eps})$ & \cite{KMS03}\\
\hhline{~----}
& $O(1)$ & - & $O(n^2 \lg \lg n/ \lg n)$ & \cite{KMS03}\\
\hhline{~----}
& $O(\sqrt{n/\lg n})$ & - & $O(n)$ & \cite{C14}\\
\hhline{~----}
& $O(n^{3/4}\lg n/\lg\lg n)$ & $O(n^{3/4} \lg \lg n)$ & $O(n)$ & \cite{C14}\\
\hhline{~----}
& $O(n^{2/3}\lg n/\lg \lg n)$ & $O(n^{2/3}\lg n/\lg \lg n)$ & $O(n^{4/3})$ & \cite{C14}\\
\hhline{~----}
& $O(n^{2/3})$ & $O(n^{2/3})$ & $O(n)$ & new\\
\hline
\multirow{2}{*}{Least Frequent} & $O(\sqrt{n})$ & - & $O(n)$ & \cite{C15}\\
\hhline{~----}
& $O(n^{2/3} \lg n\lg\lg n)$ & $O(n^{2/3})$ & $O(n)$ & new\\
\hline
$k$-Frequency & $O(n^{2/3})$ & $O(n^{2/3} \lg n)$ & $O(n)$ & new\\
\hline
\end{tabular}
\end{center}
\caption{Known static and dynamic range query upper bounds.}
\label{stateofart}
\end{table}

\subsection{Our Results}

We improve the results of Chan et al. \cite{C14} by giving a dynamic range mode data structure that takes $O(n)$ space and supports updates and queries in $O(n^{2/3})$ time. This improves the query/ update time of their linear space data structure by a polynomial factor and additionally improves the query/ update time of their $O(n^{4/3})$ space data structure by a $O(\log n / \log \log n)$ factor. We also include in our update procedures the ability to insert or delete elements in the middle of the array, an operation not addressed in previous dynamic range mode data structures.

Our improvements are based on the observation that knowing how many of a type of element occur in an interval can be as valuable as knowing the elements themselves. Specifically, instead of storing the frequency counts of elements per span, we store the number of elements with a particular frequency count per span. This information can be dynamically maintained, and uses $O(\log n)$ bits per frequency per span, rather than $O(\log n)$ bits per unique element per span.

Our technique is general enough to also apply to the range least frequent problem in a dynamic setting. To our knowledge, this is the first data structure to do so. In the version of the problem where we allow an answer to not occur in the specified query interval, our data structure supports queries in $O(n^{2/3} \log n \log \log n)$ time, updates in $O(n^{2/3})$ time, and occupies $O(n)$ space. In the version where the least frequent element must be present in the query interval, we develop a Monte Carlo data structure that supports queries in $O(n^{2/3})$ time, updates in $O(n^{2/3} \log n)$ time, and occupies $O(n \log^2 n)$ space. This data structure is correct with high probability for any polynomial sequence of updates and queries, with the restriction that updates are made independently of the results of previous queries. Notably, if the set of queries and updates is fixed in advance or given to the algorithm all at once, this property holds.

Furthermore, our Monte Carlo data structure is powerful enough to apply to the dynamic range $k$-frequency problem, also supporting queries in $O(n^{2/3})$ time, updates in $O(n^{2/3} \log n)$ time, and occupying $O(n \log^2 n)$ space. This data structure can be augmented to count the number of elements below, above, or at a given frequency, supporting both queries and updates in $O(n^{2/3})$ time and using $O(n)$ space, without the need for an independence assumption.

We organize our results as follows. 
In Section \ref{previousw}, we review previous work on static and dynamic range mode and least frequent element queries. 
In Section \ref{prelim}, we briefly give some notation that will be used for the rest of the paper. 
In Section \ref{setup}, we give the basic setup of the $O(n)$ space data structure. 
Section \ref{rmqS} describes how to answer a range mode query in $O(n^{2/3})$ time. 
Section \ref{updateS} explains how to support updates to our base data structures in $O(n^{2/3})$ time. 
In Section \ref{leastfreq}, we discuss how to answer range least frequent queries in $\tilde{O}(n^{2/3})$ time. 
Section \ref{givenfreq} describes how to find an element of a given frequency in a specified range in $O(n^{2/3})$ time. 
Here we also mention additional frequency operations our data structure can support. 
Due to space limitations, we will not describe how the data structure can be made to support insertion and deletion of elements in the conference version of this paper.

\section{Previous Work}
\label{previousw}

\subsection{Static Range Mode Query}

The static range mode query problem was first studied by Krizanc et al. \cite{KMS03}. Their focus is primarily on subquadratic solutions with fast queries, achieving $O(n^{2-2\eps})$ space and $O(n^\eps \log n)$ query time, with $0 < \eps \leq 1/2$, and $O(n^2 \log \log n / \log n)$ space and $O(1)$ query time. If we set $\eps = 1/2$ with the first approach, this gives a linear space static range mode data structure with query time $O(\sqrt{n} \log n)$. By substituting an $O(\log \log n)$ data structure for predecessor search, such as van Emde Boas trees, the query time can immediately be improved to $O(\sqrt{n} \log \log n)$.

Chan et al. \cite{C14} focus on linear space solutions to static range mode. They achieve a clever array-based solution with $O(n)$ space and $O(\sqrt{n})$ query time. By using bit-packing tricks and more advanced data structures, they reduce the query time to $O(\sqrt{n/\log n})$.

As with many range query data structures, the range mode problem has also been studied in an approximate setting \cite{GJLT10,BKMT05}.

\subsection{Static Range Least Frequent Query}

The range least frequent problem was first studied by Chan et al. \cite{C15}. They again focus on linear-space solutions, this time achieving $O(n)$ space and $O(\sqrt{n})$ time query. In their paper, they focus on the version of range least frequent element where the element must occur in the query range.

\subsection{Dynamic Range Mode}

Chan et al. \cite{C14} also study the dynamic range mode problem. They give a solution tradeoff that at linear space, achieves $O(n^{3/4} \log n / \log \log n)$ worst-case time range mode query and $O(n^{3/4} \log \log n)$ amortized expected time update. At minimal update/ query time, this tradeoff gives $O(n^{4/3})$ space and $O(n^{2/3} \log n / \log \log n)$ worst-case time range mode query and amortized expected time update.

\subsection{Lower Bounds}

Both \cite{C14} and \cite{C15} give conditional lower bounds for range mode and range least frequent problems, respectively. Chan et al. \cite{C14} reduces multiplication of two $\sqrt{n} \times \sqrt{n}$ boolean matrices to $n$ range mode queries in an array of size $O(n)$. This indicates that with current knowledge, preprocessing an $O(n)$-sized range mode query data structure and answering $n$ range mode queries cannot be done in better than $O(n^{\omega/2})$ time, where $\omega$ is the constant in the exponent of the running time of matrix multiplication. With $\omega = 2.3727$ \cite{W12}, this implies with current knowledge that a range mode data structure must either have preprocessing time at least $\Omega(n^{1.18635})$ or query time at least $\Omega(n^{0.181635})$. Since we may choose to update an array rather than initializing it, this lower bound also indicates that a dynamic range mode data structure must have update/ query time at least $\Omega(n^{\omega/2 -1})$.

In \cite{C14}, Chan et al. also give another conditional lower bound for dynamic range mode. They reduce the multiphase problem of P\v{a}tra\c{s}cu \cite{P10} to dynamic range mode. A reduction from 3-SUM (given $n$ integers, find three that sum to zero) is given by P\v{a}tra\c{s}cu \cite{P10} to the multiphase problem. Based on the conjecture that the 3-SUM problem cannot be solved in $O(n^{2-\eps})$ time for any positive constant $\eps$, this chain of reductions implies a dynamic range mode data structure must have polynomial time query or update.

The reduction of $\sqrt{n} \times \sqrt{n}$ boolean matrix multiplication to $n$ $O(n)$-sized range mode queries can be adopted to achieve the same conditional lower bound for the range least frequent problem \cite{C15}. Upon examination, the conditional lower bound also applies to $k$-frequency queries.

An unconditional lower bound also exists in the cell probe model for the range mode and $k$-frequency problem. Any range mode/ $k$-frequency data structure that uses $S$ memory cells of $w$-bit words needs $\Omega(\frac{\log n}{\log (Sw/n)})$ time to answer a query \cite{GJLT10}.

\section{Preliminaries}
\label{prelim}

Before we discuss the technical details of our results, it will be helpful to develop some notation. 

As in \cite{C14,C15}, we will denote the subarray of $A$ from index $i$ to index $j$ as $A[i:j]$ and use array notation $A[i]$ to denote the element of $A$ at index $i$. We will assume zero-based indexing throughout this paper.

Furthermore, for range frequency data structures, the actual type that array $A$ stores is irrelevant; we only care about how many times each element occurs. It will be useful to think of the identity of an element as a color. Therefore, we may say the color $c$ occurs with frequency $f$ in range $A[l:r]$. This is to distinguish that we are not referring to a particular index but rather the identity of multiple indices in the range.

For simplicity, we will assume the number of elements $n$ is a perfect cube; however, the results discussed easily generalize for arbitrary $n$. All $\log$'s in this paper are assumed to be base $2$.

In Appendix \ref{insertdelete} we will discuss operations that insert and delete elements into and from array $A$. 
This effectively changes the indexing of elements after the point of operation. 
It will be helpful to develop an indexing scheme that does not change with the effect of insertion or deletion. 
To differentiate between the interface of the operations and the internals used by the data structure, we will denote with every position of $A$ two values: the \textit{rank} and \textit{index} of the corresponding position. 
The interface of all our operations will be based on rank; however, the internals of our data structure will be based on index. 
This will be further discussed in Appendix \ref{insertdelete} and is only important when insert and delete operations are permitted.

In all our proofs, we analyze space cost in words, that is, $O(\log n)$ collections of bits.

\section{Data Structure Setup}
\label{setup}

The idea of our data structure will be to break up array $A$ into $O(n^{1/3})$ evenly-spaced endpoints, so that there are $O(n^{2/3})$ elements between each endpoint. We will use capital letters $L$ and $R$ when referring to particular endpoints. We will also occasionally refer to the elements between two consecutive endpoints as a segment; therefore, there are $O(n^{1/3})$ segments in $A$. 

Each color that occurs in $A$ will be split into the following two disjoint categories:

\begin{itemize}
  \item \textbf{Frequent Colors}: Any color that appears more than $n^{1/3}$ times in $A$.
  \item \textbf{Infrequent Colors}: Any color that appears at most $n^{1/3}$ times in $A$.
\end{itemize}

Note that there can be at most $n / n^{1/3} = n^{2/3}$ frequent colors in $A$ at any point in time.

Our data structure will need to use dynamic arrays as auxiliary data structures. These dynamic arrays will be a modification of the simple two-level version of the data structure described by Goodrich and Koss \cite{GK99}. We have the following lemma regarding the performance of these dynamic arrays:

\begin{lemma}
\label{darrayL}
There is a dynamic array data structure $D$ that occupies $O(n)$ space and supports:
\begin{enumerate}
  \item $D[i]$: Retrieve/ Set the element at rank $i$, in $O(1)$ time.
  \item $Insert(i, x)$: Insert the element $x$ at rank $i$, in $O(\sqrt{n})$ time.
  \item $Delete(i, x)$: Delete the element $x$ at rank $i$, in $O(\sqrt{n})$ time.
  \item $Rank(ptr)$: Determine the rank of the element pointed to by $ptr$, in $O(1)$ time.
\end{enumerate}
\end{lemma}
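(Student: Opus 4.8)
The plan is to implement a two-level structure in the style of Goodrich and Koss, tuned so that each level holds roughly $\sqrt{n}$ items, and to add an auxiliary pointer-tracking layer so that the fourth operation can be supported in $O(1)$ time. First I would maintain a top-level dynamic array of pointers to \emph{blocks}, where each block is a contiguous chunk of the logical sequence stored in its own small array. The invariant will be that every block has size $\Theta(\sqrt{n})$, so that there are $O(\sqrt{n})$ blocks in total. Given a rank $i$, operation $D[i]$ first locates the block containing rank $i$ and then the offset within that block; to make this lookup $O(1)$ rather than $O(\log n)$, I would keep, alongside the top-level array, a running count structure that lets me convert a rank to a (block, offset) pair in constant time. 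Because the blocks all have size within a constant factor of $\sqrt{n}$, a rank-to-block computation can be done by maintaining a prefix-sum-like array of block sizes; since block sizes change only by $\pm 1$ per update and blocks split/merge only $O(1)$ times per update, this auxiliary information can itself be refreshed in $O(\sqrt{n})$ time, which is within budget.

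Next I would handle $Insert(i,x)$ and $Delete(i,x)$. To insert at rank $i$, I locate the target block in $O(1)$ time as above, then shift elements within that block to open a slot, costing $O(\sqrt{n})$ since a block has $O(\sqrt{n})$ elements. If the block overflows beyond its size bound, I split it into two blocks and insert a new pointer into the top-level array; since the top-level array also has $O(\sqrt{n})$ entries, inserting a pointer there costs $O(\sqrt{n})$. Deletion is symmetric: remove the element, shift within the block, and merge with a neighbour (or rebalance) if the block underflows, again at total cost $O(\sqrt{n})$. The standard amortized analysis of this scheme gives $O(\sqrt{n})$ per update; to obtain the worst-case bound, I would use the usual incremental-rebuilding trick of spreading the cost of a global rebuild over the operations, or note that split/merge events are charged against $\Omega(\sqrt{n})$ prior operations. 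In all cases the total work per update stays $O(\sqrt{n})$, and the space remains $O(n)$ because we store each element once plus $O(\sqrt{n})$ block headers and pointers.

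The one genuinely new requirement beyond the textbook structure is operation $Rank(ptr)$, which must report, in $O(1)$ time, the current rank of the element a pointer references. The idea is that each element lives in a block, and each block knows its own position among the blocks; so $Rank(ptr)$ should return (sum of sizes of all earlier blocks) $+$ (offset of the element within its block). To get both summands in $O(1)$, I would store with each element a back-pointer to its block header, store within each block header the block's current starting rank, and store within each element its offset in the block. The difficulty is that a single insertion or deletion shifts the ranks of potentially $\Omega(n)$ later elements, so we cannot afford to update every affected element's stored rank. The key observation that resolves this is that we need only maintain, per block, the block's starting rank and, per element, its offset \emph{within} its block; an insertion changes the starting ranks of only the $O(\sqrt{n})$ blocks and the offsets of only the $O(\sqrt{n})$ elements in the one affected block, all of which fit in the $O(\sqrt{n})$ update budget. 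Thus a pointer stored as (block header, intra-block offset) yields its global rank in $O(1)$ by adding the header's starting-rank field to the offset.

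The main obstacle I anticipate is reconciling the $O(1)$ worst-case bounds for $D[i]$ and $Rank(ptr)$ with the fact that ranks are global quantities that an update can perturb en masse. Everything hinges on localizing the representation of rank into a two-level sum, so that only $O(\sqrt{n})$ summands ever change per update; once that decomposition is in place, the query operations read off two precomputed fields and add them, while the update operations touch only one block's worth of elements plus the $O(\sqrt{n})$ block headers. I would therefore spend the bulk of the proof establishing this invariant carefully and verifying that split, merge, and global-rebuild events preserve it within the stated time and space bounds.
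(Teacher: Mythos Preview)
Your two-level block scheme is the right template, and your handling of $Rank(ptr)$ via a (block header, intra-block offset) pair is sound: updating $O(\sqrt{n})$ offsets in the touched block and $O(\sqrt{n})$ starting-rank fields in later block headers fits the budget. The gap is in operation $D[i]$. With variable-size blocks in the range $\Theta(\sqrt{n})$, a ``prefix-sum-like array of block sizes'' lets you read off the starting rank of a given block in $O(1)$, but it does \emph{not} let you go the other way and locate which block contains a given rank $i$ in $O(1)$; that is an inverse-prefix-sum (predecessor) query, and with $O(\sqrt{n})$ blocks whose sizes differ by a constant factor the candidate block index spans an interval of width $\Theta(\sqrt{n})$, so neither binary search nor a guess-and-scan argument gives constant time. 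You assert the $O(1)$ bound but never supply the mechanism.

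The paper sidesteps this by keeping every block at \emph{exactly} the same size $m=\lceil\sqrt{n_f}\rceil$ (except possibly the last). Then $D[i]$ is literally $Q_{\lfloor i/m\rfloor}[i \bmod m]$, pure arithmetic in $O(1)$. To maintain the fixed-size invariant, insertion into block $Q_j$ pops the tail element, which is pushed onto the head of $Q_{j+1}$, and so on down the line; each block is a circular queue so these head/tail operations are $O(1)$, and there are $O(n/m)=O(\sqrt{n})$ of them. Deletion cascades symmetrically. This ``cascading'' replaces your split/merge machinery and is what buys the $O(1)$ access. For $Rank(ptr)$, the paper stores each element's position in its queue's underlying array and computes the intra-queue rank as $(\text{position} - \text{head}) \bmod m$, then adds $m$ times the queue index; cascading moves only one element per queue, so only $O(m)$ stored positions (those shifted inside the first queue) need updating. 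If you want to salvage your variable-block approach you would need an additional idea for constant-time rank-to-block lookup; otherwise, switch to fixed-size blocks with cascading.
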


We leave the proof of Lemma \ref{darrayL} and discussion of the dynamic array data structure to Appendix \ref{darray}. 
If desired, the data structure can be replaced by a balanced binary search tree, at the cost of additional logarithmic factors in the query times of our data structure.\\

We can now describe the base set of auxiliary data structures used throughout the paper:
\begin{enumerate}
 \item Arrays $B_{L, R}$, for all pairs of endpoints $L, R$, indexed from $0$ to $n^{1/3}$, so that
 \[
 B_{L, R}[i] := \text{The number of infrequent colors with frequency $i$ in $A[L:R]$.}
 \]
 \item Dynamic arrays $D_c$, for every color $c$, so that
 \[
 D_c[i] := \text{The index in $A$ of the $i$th occurrence of color $c$.}
 \]
 \item An array $E$ parallel to $A$ so that
 \[
 E[i] := \text{A pointer to the location in memory of index $i$ in dynamic array $D_{A[i]}$.}
 \]
 \item A binary search tree $F$ of endpoints. At each endpoint $R$, we store
 \[
 F[R] := \text{A binary search tree on frequent colors, giving their frequency in $A[0:R]$.}
 \]
\end{enumerate}

In regards to $B_{L, R}$, we will sometimes refer to the set of elements between endpoints $L$ and $R$ as the span of $L$ and $R$.

We now analyze the space complexity and construction time.

\begin{lemma}
The base data structures take $O(n)$ space.
\end{lemma}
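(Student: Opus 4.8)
The plan is to bound the space of each of the four base data structures separately and sum the totals. First I would handle the arrays $B_{L, R}$. There are $O(n^{1/3})$ endpoints, hence $O(n^{1/3} \cdot n^{1/3}) = O(n^{2/3})$ pairs $(L, R)$, and each array $B_{L, R}$ has $n^{1/3} + 1$ entries (indexed from $0$ to $n^{1/3}$), each a single word storing a count of infrequent colors. This gives $O(n^{2/3}) \cdot O(n^{1/3}) = O(n)$ words total. I expect this to be the crux of the argument: naively one might store per-color frequency information per span, costing $O(\log n)$ bits per unique color per span and blowing up the space; the entire point of the data structure is that we index by \emph{frequency value} (of which there are only $O(n^{1/3})$, since infrequent colors appear at most $n^{1/3}$ times) rather than by color identity, so that the product of the number of spans and the number of entries per span telescopes to linear.

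Next I would account for the dynamic arrays $D_c$. Across all colors $c$, the arrays $D_c$ collectively store exactly one entry per occurrence of each element in $A$, i.e. $n$ entries in total. By Lemma~\ref{darrayL}, each dynamic array occupies space linear in its number of elements, so the combined space is $O(n)$ words. The array $E$ is explicitly parallel to $A$ and stores one pointer per index, giving $O(n)$ words directly.

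Finally I would bound the tree $F$. The outer binary search tree is over the $O(n^{1/3})$ endpoints, so it has $O(n^{1/3})$ nodes. At each endpoint $R$, the inner tree $F[R]$ is built on the frequent colors, of which there are at most $n^{2/3}$ as noted in the setup (a frequent color appears more than $n^{1/3}$ times, and $A$ has only $n$ elements). A binary search tree on at most $n^{2/3}$ keys uses $O(n^{2/3})$ space, and summing over the $O(n^{1/3})$ endpoints yields $O(n^{1/3} \cdot n^{2/3}) = O(n)$ words. Adding the four contributions—each $O(n)$—gives the claimed $O(n)$ total. The only subtlety to watch is confirming that $F[R]$ stores at most $n^{2/3}$ entries rather than one per endpoint pair; since frequent colors are defined globally (by their frequency in all of $A$) and not per span, the bound of $n^{2/3}$ frequent colors applies uniformly at every endpoint, so no extra factor creeps in.
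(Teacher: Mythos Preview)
Your proposal is correct and matches the paper's proof essentially point for point: both bound $B_{L,R}$, $D_c$, $E$, and $F$ separately, using the same counts ($O(n^{2/3})$ spans times $O(n^{1/3})$ entries, $n$ total occurrences across all $D_c$, $E$ parallel to $A$, and $O(n^{1/3})$ endpoints times $O(n^{2/3})$ frequent colors). Your additional commentary on why indexing by frequency value rather than color identity is the key space-saving idea is accurate and not needed for the formal argument, but it correctly identifies the design insight.
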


\begin{proof}
The arrays $B_{L, R}$ have size $O(n^{1/3})$ and there are $O(n^{2/3})$ of them, so in total these take $O(n)$ space. Every index of $A$ is present in exactly one of the $D_c$ arrays, and each dynamic array takes linear space. Therefore, in total all $D_c$ arrays take $O(n)$ space. Array $E$ has exactly the same size as $A$ and thus takes $O(n)$ space. The binary search trees in each node of $F$ have size equal to the number of frequent colors, which is at most $n^{2/3}$. Since there are $O(n^{1/3})$ endpoints and thus nodes of $F$, this structure takes $O(n)$ space.
\end{proof}

\begin{lemma}
\label{build}
The base data structures can be initialized in $O(n^{4/3})$ time.
\end{lemma}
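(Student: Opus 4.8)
The plan is to handle the four families of structures in increasing order of difficulty, showing that everything except the $B_{L,R}$ arrays costs only $O(n)$ (up to logarithmic factors), and that the $B_{L,R}$ arrays dominate at $O(n^{4/3})$. First I would build the dynamic arrays $D_c$ and the parallel array $E$ together in a single left-to-right scan of $A$: for each index $i$ we append $i$ to $D_{A[i]}$ and record the resulting memory location in $E[i]$. Since each index is inserted into exactly one $D_c$ and each append is $O(1)$ amortized (or $O(1)$ after presizing, since the final length of every $D_c$ is known from a preliminary counting pass), this is $O(n)$. The same counting pass gives the total frequency of every color, which immediately classifies each color as frequent or infrequent, so the partition costs $O(n)$ as well.

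For the endpoint tree $F$, I would sweep through $A$ once while maintaining a frequency array $\phi$ of length at most $n^{2/3}$ holding the current count of each frequent color in the scanned prefix. There are only $O(n^{1/3})$ endpoints, and upon reaching endpoint $R$ I snapshot $\phi$ and build a balanced binary search tree $F[R]$ from it; since the entries are already ordered by (relabeled) color, each tree is built in $O(n^{2/3})$ time. Summed over all endpoints this is $O(n^{1/3}\cdot n^{2/3})=O(n)$, and assembling the outer tree on the $O(n^{1/3})$ endpoints is negligible.

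The heart of the proof is the construction of the $B_{L,R}$ arrays. The naive idea of computing each of the $O(n^{2/3})$ arrays independently by rescanning its span would cost $O(n)$ per pair and hence $O(n^{5/3})$ overall, which is too slow. Instead, for each fixed left endpoint $L$ I would sweep $R$ rightward from $L$ to the end of $A$, maintaining a histogram array $B$ of length $O(n^{1/3})$ together with a frequency counter for each infrequent color. When an infrequent color's count rises from $f$ to $f{+}1$ I decrement $B[f]$ (when $f\ge 1$) and increment $B[f{+}1]$, which is $O(1)$ per scanned element; whenever the sweep reaches an endpoint $R$ I copy the current $B$ into $B_{L,R}$ at cost $O(n^{1/3})$. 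Between successive choices of $L$ I reset only the counters and histogram entries actually touched, tracked through a dirty list, so the reset cost is absorbed into the scanning cost. For a fixed $L$ the scan touches at most $n$ elements and visits $O(n^{1/3})$ endpoints, giving $O(n)$ for scanning and $O(n^{2/3})$ for copying; summing over the $O(n^{1/3})$ choices of $L$ yields $O(n^{1/3}\cdot n)=O(n^{4/3})$ for scanning and $O(n^{1/3}\cdot n^{2/3})=O(n)$ for copying.

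The main obstacle, and the step deserving the most care, is exactly this $B_{L,R}$ analysis: one must argue that the incremental histogram updates correctly maintain $B_{L,R}[i]$ as the number of infrequent colors of frequency $i$ in $A[L:R]$, and that the reset between outer iterations does not secretly cost more than the scanning it amortizes against. Once this is established, adding the $O(n)$ contributions from the $D_c$, $E$, and $F$ structures to the $O(n^{4/3})$ contribution from the $B_{L,R}$ arrays gives the claimed $O(n^{4/3})$ initialization time.
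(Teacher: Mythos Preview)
Your proposal is correct and follows essentially the same approach as the paper: a counting pass to classify colors, linear-time construction of $D_c$, $E$, and $F$, and for each left endpoint $L$ a rightward sweep that maintains the histogram $B_{L,*}$ with $O(1)$ updates per scanned element and an $O(n^{1/3})$ copy at each right endpoint. The paper's proof is slightly terser (it does not spell out the reset-via-dirty-list detail and states the classification cost as $O(n\log n)$ rather than $O(n)$), but the decomposition and dominant $O(n^{4/3})$ term come from exactly the same sweep argument.
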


\begin{proof}
We can count the number of occurrences of each color in $O(n \log n)$ time and determine for each color whether it is frequent or infrequent. Let $A'$ be the array $A$ without any frequent colors. We can scan $A'$ $n^{1/3}$ times, starting from each endpoint, to build the arrays $B_{L, R}$. This will be done as follows. For each scan, we maintain an array $T$ so that $T[c]$ denotes the number of occurrences of color $c$ found so far. We also maintain the array $B_{L, *}$ which is the array $B$ with endpoint $L$ and a variable right endpoint, that is maintained as elements are scanned. When $A[i] = c$, we check the number of occurrences of $c$ to update $B_{L, *}$ to the correct state. In total, each element scanned results in $O(1)$ operations, until we reach a right endpoint. When we reach a right endpoint, we write $B_{L, *}$ to array $B_{L, R}$, where $R$ denotes the right endpoint just encountered. In this way, for all endpoints $L, R$, we spend $O(n^{1/3})$ time to create the array. Thus the element scan dominates the time complexity, requiring $O(n^{4/3})$ time to create all $B_{L, R}$ arrays.

The dynamic arrays $D_c$ can be built in linear time overall by walking through $A$ and appending indices to the ends of $D_c$ arrays. At this same time $E$ can be built. The BSTs for all endpoints in $F$ can also be built in linear time overall, since there are at most $n^{2/3}$ frequent colors per list, there are $n^{1/3}$ lists in total, and counting each frequent element in each interval can be done at the same time in one scan through $A$.
\end{proof}

Throughout the next few sections, we will use the following Lemma, as in \cite{C14}:

\begin{lemma}
\label{questions}
Let $A[i] = c$. Then, given frequency $f$ and right endpoint $j$, our base data structures may answer the following questions in constant time:
\begin{align}
&\text{Does color $c$ occur at least $f$ times in $A[i:j]$?} \label{q1} \\
&\text{Does color $c$ occur at most $f$ times in $A[i:j]$?} \label{q2} \\
&\text{Does color $c$ occur exactly $f$ times in $A[i:j]$?} \label{q3}
\end{align}
\end{lemma}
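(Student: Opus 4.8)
The plan is to answer all three questions using only the dynamic arrays $D_c$ and the pointer array $E$; the aggregate structures $B_{L,R}$ and $F$ play no role here, since the lemma concerns a single fixed color. The starting observation I would use is that $D_c$ stores the indices of the occurrences of $c$ in increasing order, so the occurrences of $c$ inside $A[i:j]$ are exactly those whose stored value lies in $[i,j]$. Because $A[i]=c$, the pointer $E[i]$ references the cell of $D_c$ holding the value $i$; applying the $Rank$ operation of Lemma \ref{darrayL} to $E[i]$ yields in $O(1)$ time the position $k$ with $D_c[k]=i$. Thus the occurrence at index $i$ is the $k$th occurrence of $c$ (counting from $0$), and everything reduces to arithmetic on $k$.

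Given $k$, the remaining work is a pair of table lookups. The occurrence of $c$ at index $i$ together with the next $f-1$ occurrences of $c$ are stored consecutively at positions $k, k+1, \dots, k+f-1$ of $D_c$, and these are precisely the first $f$ occurrences of $c$ at index $\geq i$. Hence $c$ occurs at least $f$ times in $A[i:j]$ if and only if position $k+f-1$ exists in $D_c$ (that is, $k+f-1 < |D_c|$) and $D_c[k+f-1] \leq j$; this settles question \eqref{q1}. Question \eqref{q2} is the complement of ``at least $f+1$ times'', so $c$ occurs at most $f$ times in $A[i:j]$ if and only if position $k+f$ is absent from $D_c$ or $D_c[k+f] > j$. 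Question \eqref{q3} is then the conjunction of the tests for \eqref{q1} and \eqref{q2}.

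Each of these conditions requires at most two retrievals $D_c[\cdot]$, a length comparison, and an inequality against $j$, all of which run in $O(1)$ time by the $D[i]$ operation of Lemma \ref{darrayL}; this gives constant total time as claimed. I do not expect any genuine obstacle here: the only delicate point is bookkeeping. One must keep the offsets straight (the $f$th remaining occurrence sits at $k+f-1$, while the would-be $(f+1)$th sits at $k+f$) and must guard every access with a bounds check, since $k+f-1$ or $k+f$ may run past the last stored occurrence of $c$ — which is exactly the situation in which fewer than $f$ occurrences of $c$ remain to the right of $i$. Handling those boundary cases correctly is the whole of the care required.
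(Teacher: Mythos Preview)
Your proposal is correct and follows essentially the same approach as the paper: use $E[i]$ and the $Rank$ operation of Lemma~\ref{darrayL} to locate the position $r$ of $i$ in $D_c$, then inspect $D_c[r+f-1]$ and $D_c[r+f]$ relative to $j$. You spell out the three conditions and the bounds checks more explicitly than the paper does, but the underlying idea is identical.
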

\begin{proof}
We call $rank(E[i])$ on $D_c$ to get the rank $r$ of $i$ in $D_c$. Since array $D_c$ stores the indices of every occurrence of color $c$ in $A$, the values of $D_c[r+f]$ and $D_c[r+f-1]$ determine the answers to the above questions.
\end{proof}
A similar strategy can be used to answer the above questions given an index $j$ and left endpoint $i$.

\section{Range Mode Query}
\label{rmqS}
The range mode query will make use of the following lemma, originating from \cite{KMS03} and also used by \cite{C14}:

\begin{lemma}[Krizanc et al. \cite{KMS03}]
\label{split}
Let $A_1$ and $A_2$ be any multisets. If $c$ is a mode of $A_1 \cup A_2$ and $c \notin A_1$, then $c$ is a mode of $A_2$.
\end{lemma}

The query algorithm can be summarized as follows:

\begin{algorithm}[H]
\caption{Range Mode Query in $A[l:r]$}
\label{modeq}
\label{freq}
\begin{flushleft}
Let $L$ and $R$ be the first and last endpoints in $[l, r]$, respectively.
\begin{enumerate}
  \item Check the frequency of every frequent color in $A[L:R]$ via BSTs $F[R]$ and $F[L]$. Let $f$ be the highest frequency found so far.
  \item Ask question \eqref{q1} for all colors in $A[l:L-1] \cup A[R+1:r]$ with frequency $f$ and right endpoint $r$/ left endpoint $l$. If \eqref{q1} is answered in the affirmative for color $c$, linearly scan $D_c$ to count the number of occurrences of color $c$ in $[l, r]$, update $f$, and continue.
  \item Find the largest nonzero index of $B_{L, R}$. If this is larger than $f$, update $f$ and do the following:
  \begin{enumerate}
    \item Find the next endpoint $R'$ to the left of $R$.
    \item Check $B_{L, R'}[f]$. If $B_{L, R'}[f] < B_{L, R}[f]$, search $A[R'+1:R]$ for a color that occurs $f$ times in $A[L:R]$, via question \eqref{q3} with left endpoint $L$.
    \item Otherwise, repeat from step (a) with $R \leftarrow R'$.
  \end{enumerate}
  \item Return $f$ and the corresponding color found from either step 1, 2, or 3(b).
\end{enumerate}
\end{flushleft}
\end{algorithm}

\begin{theorem}
Algorithm \ref{modeq} finds the current range mode of $A[l:r]$ in $O(n^{2/3})$ time.
\end{theorem}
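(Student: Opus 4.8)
The plan is to establish two things: that Algorithm~\ref{modeq} returns a color of maximum frequency in $A[l:r]$, and that each of its three phases runs in $O(n^{2/3})$ time. Throughout the argument I would maintain the invariant that the running value $f$ is always the frequency of some color that actually occurs in $A[l:r]$, so that $f$ never exceeds the true mode frequency $f^\ast$; correctness then reduces to showing that by the time the algorithm halts we have $f \ge f^\ast$.

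Write $A[l:r] = A[l:L-1]\cup A[L:R]\cup A[R+1:r]$, the prefix, span, and suffix. For correctness I would fix a mode $c^\ast$ of $A[l:r]$ and split into cases. If $c^\ast$ occurs in the prefix, its leftmost occurrence in $A[l:r]$ lies there, and symmetrically if it occurs in the suffix its rightmost occurrence lies there; in either case Step~2 processes a position from which, by Lemma~\ref{questions} with right endpoint $r$ (resp.\ left endpoint $l$), the full count of $c^\ast$ over $A[l:r]$ is exposed, the test \eqref{q1} succeeds since $f\le f^\ast$, and the ensuing scan of $D_{c^\ast}$ sets $f=f^\ast$. If instead $c^\ast$ occurs nowhere in the prefix or suffix, then Lemma~\ref{split}, applied with $A_1$ the prefix together with the suffix and $A_2$ the span, shows $c^\ast$ is a mode of the span and that its frequency in $A[l:r]$ equals its frequency in $A[L:R]$. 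When $c^\ast$ is frequent this maximal span frequency is detected in Step~1 by comparing $F[R]$ and $F[L]$; when it is infrequent it equals the largest nonzero index of $B_{L,R}$, and I would argue that Step~3 finds a witness: as the right endpoint moves left from $R$, the count $B_{L,\cdot}[f]$ descends from at least one down to zero, and at the first drop the vacated color has exactly $f$ occurrences in the current span, hence—being infrequent and bounded above by $f$ in the full span—has frequency exactly $f$ in $A[L:R]$ as well, so question~\eqref{q3} identifies it during the scan of the $O(n^{2/3})$-length segment.

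For the running time I would treat the phases separately. Step~1 is a single synchronized in-order traversal of $F[L]$ and $F[R]$, taking time linear in their size, i.e.\ $O(n^{2/3})$ since there are at most $n^{2/3}$ frequent colors. Step~3 inspects $O(n^{1/3})$ endpoints with $O(1)$ work each through the $B$ arrays and then scans a single $O(n^{2/3})$-element segment, so it is $O(n^{2/3})$. The delicate phase is Step~2: there are $O(n^{2/3})$ boundary positions, each contributing an $O(1)$ query by Lemma~\ref{questions}, but I must bound the total cost of the affirmative scans of the $D_c$ arrays.

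Bounding those scans is the step I expect to require the most care. The key observations I would use are that $f$ is monotonically nondecreasing and that phrasing the test as ``occurs more than $f$ times'' forces every successful scan to strictly increase $f$ and guarantees that no color is scanned twice. For infrequent colors the scanned frequencies are then distinct integers in $\{1,\dots,n^{1/3}\}$, so their scan costs sum to at most $1+2+\cdots+n^{1/3}=O(n^{2/3})$. Frequent colors cannot be scanned wholesale without risking $\Theta(n)$ cost, so for them I would read off the span count from Step~1's BST comparison and add only the boundary occurrences, which are tallied once across all $O(n^{2/3})$ prefix and suffix positions. Verifying this amortization carefully—in particular that no color can trigger a second scan and that the frequent-color bookkeeping folds into the single $O(n^{2/3})$ boundary pass—is the crux of the time analysis.
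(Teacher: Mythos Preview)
Your correctness argument and the timing of Steps~1 and~3 match the paper closely. The substantive divergence is in bounding the scan cost in Step~2. You worry that a linear scan of $D_c$ for a frequent color risks $\Theta(n)$ work, and therefore propose handling frequent colors separately by reading the span count from the Step~1 BSTs and adding only boundary occurrences. That modifies the algorithm and is unnecessary: as the paper points out, each scan resumes at $D_c[m+f+1]$ (where $m$ is the rank in $D_c$ of the current boundary position), so every step of every scan increases $f$ by exactly one. The total scan cost over all of Step~2 therefore telescopes to the net increase of $f$ during Step~2. The paper then cases on the color $c$ realizing the final value of $f$: if $c$ is infrequent, $f\le n^{1/3}$ and the total scan cost is $O(n^{1/3})$; if $c$ is frequent, its span frequency was already a lower bound on $f$ after Step~1, and its frequency in $[l,r]$ exceeds its span frequency by at most the $O(n^{2/3})$ boundary elements, so the increase is $O(n^{2/3})$.

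This single telescoping observation subsumes both halves of your split accounting---the ``distinct integers up to $n^{1/3}$'' sum for infrequent colors and the separate BST bookkeeping for frequent ones. Your variant would also run in $O(n^{2/3})$, but the theorem is about Algorithm~\ref{modeq} as written; the scan-start position is exactly the detail that makes the stated algorithm work without your modification, and recognizing it yields a simpler and (in the infrequent case) tighter bound.
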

\begin{proof}
Endpoints $L$ and $R$ can be found from $[l, r]$ by appropriate floors and ceilings in constant time.

In step 1, we can iterate through $F[R]$ and $F[L]$ in $O(n^{2/3})$ time, determining frequency counts for all frequent elements.

In step 2, answering question \eqref{q1} takes $O(1)$ time per element via Lemma \ref{questions}. 
Note that we use left endpoint $l$ for elements in range $A[R+1:r]$ and right endpoint $r$ for elements in range $A[l:L-1]$.  
Although we do not check \eqref{q1} for the full range $[l,r]$, 
we will check the first/ last occurrence of any color in $A[l:L-1] \cup A[R+1:r]$, thereby effectively checking for all of $[l,r]$. 
When \eqref{q1} is answered in the affirmative at index $i$, 
we linearly scan $D_c$, starting at $D_c[i + f + 1]$ ($D_c[i-f-1]$ if $i \in [R+1:r]$) to determine a new highest frequency.
%
%
Let us determine the cost of these linear scans. 
Let $c$ be the most frequent color found from steps 1 and 2. If $c$ is an infrequent color, 
it cannot occur more than $n^{1/3}$ times, so the total cost of the linear scans is no more than $O(n^{1/3})$. 
If it is a frequent color, its frequency cannot have increased by more than $O(n^{2/3})$, since its frequency was checked in step 1 and only $O(n^{2/3})$ elements exist in $A[l : L-1] \cup A[R+1 : r]$. 
Thus the total cost of the linear scans is no more than $O(n^{2/3})$. 
In either case, step 2 takes $O(n^{2/3})$ time.

For step 3, finding the largest nonzero index of $B_{L, R}$ takes $O(n^{1/3})$ time. If this is larger than the frequencies found in steps 1 or 2, we execute steps 3(a) - 3(c). We can only repeat the steps at most $O(n^{1/3})$ times. The condition $B_{L, R'}[f] < B_{L, R}[f]$ will happen for one of these iterations, since eventually $R' = L$ in which case there are no elements accounted for in the interval. When $B_{L, R'}[f] < B_{L, R}[f]$, this implies a color with frequency $f$ in range $[L, R]$ appears somewhere in $A[R' + 1 : R]$. There are only $O(n^{2/3})$ elements in $A[R' + 1 : R]$. Checking if color $c$ occurs exactly $f$ times in $A[L:R]$ can be done by asking question \eqref{q3}.

For the correctness of the value returned in step 4, note that the mode of $A[l:r]$ is either an element in $A[l:L-1] \cup A[R+1:r]$ or the mode of $A[L:R]$, by Lemma \ref{split}. The frequency of all colors in $A[l:L-1] \cup A[R+1:r]$ is checked. Further, the frequency of infrequent and frequent colors for interval $A[L:R]$ is also checked in steps 1 and 3, respectively. Therefore the color and frequency returned in step 4 must be the mode of $A[l:r]$.
Putting it together, we see Algorithm \ref{modeq} is correct and takes $O(n^{2/3})$ time.
\end{proof}

\section{Update Operation}
\label{updateS}

The update operation will require us to keep the base data structures up to date. Given update $A[i] \leftarrow c$, there are two similar procedures that must occur: adjusting the data structures for the removal of current color $A[i]$ and adjusting the data structures for the addition of color $c$ at index $i$. 

The following algorithm can be used for both procedures. We note that if either the color removed is the last occurrence of its type or the color added is a new color, the list $D_c$ will have to also be constructed/ deleted and $B_{L, R}$ should be modified to only reflect colors present in $A$. For simplicity we omit these details from Algorithm \ref{update}.

\begin{algorithm}[H]
\caption{Update Base Data Structures for Addition/ Removal of Color $c$ at Index $i$.}
\label{update}
\begin{flushleft}
\begin{enumerate}
  \item If color $c$ is infrequent prior to this operation, count how many times $c$ occurs in each span via $D_c$, decrementing the corresponding index of each $B$ array.
  \item Adjust $D_c$ by adding/ removing $i$. If this is an add operation, set $E[i]$ to the memory location of $i$ in $D_c$.
  \item If color $c$ remains or becomes infrequent after step 2, again count how many times $c$ occurs in each span via $D_c$, incrementing the corresponding index of each $B$ array.
  \item If color $c$ became infrequent in step 2, delete its entry in all nodes of $F$. If color $c$ became frequent after step 2, add its frequencies to $F$. If color $c$ remained frequent after step 2, increment the frequencies of all prefixes including index $i$ in $F$.
\end{enumerate}
\end{flushleft}
\end{algorithm}

\begin{theorem}
Algorithm \ref{update} updates the base data structures for addition/ removal of color $c$ at index $i$ in $O(n^{2/3})$ time.
\end{theorem}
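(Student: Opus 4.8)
The plan is to bound the cost of each of the four steps of Algorithm~\ref{update} separately and verify that each is $O(n^{2/3})$. The overarching principle I would rely on is that the expensive $B_{L,R}$ updates only ever touch \emph{infrequent} colors, and an infrequent color occurs at most $n^{1/3}$ times, so the occurrences of $c$ are distributed among at most $O(n^{1/3})$ segments; combined with the fact that there are $O(n^{1/3})$ endpoints $L$ (equivalently, $O(n^{1/3})$ spans sharing a given left endpoint), each occurrence of $c$ should contribute to $B$-array updates for only the spans that contain it, giving a total of $O(n^{2/3})$ array decrements/increments.

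\textbf{Steps 1 and 3.} First I would analyze the $B$-array updates. Here the key observation is that when we change the frequency of an infrequent color $c$ inside a span $A[L:R]$ from $f$ to $f'$, we must decrement $B_{L,R}[f]$ and increment $B_{L,R}[f']$; this is $O(1)$ work \emph{per span affected}. The number of spans whose contents (restricted to $c$) actually change is the number of pairs $(L,R)$ of endpoints straddling index $i$ in a way that includes at least one occurrence of $c$. Since $c$ is infrequent it occupies at most $n^{1/3}$ indices, hence at most $n^{1/3}$ distinct segments, and with $O(n^{1/3})$ endpoints there are $O(n^{2/3})$ such spans. To make the counting concrete I would walk $D_c$ once (it has length $\le n^{1/3}$) to determine, for each pair of endpoints, how many occurrences of $c$ lie in $A[L:R]$; this per-span occurrence count is exactly the index of the $B$ array to decrement in Step~1 (and increment in Step~3 after $D_c$ has been modified). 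The total is $O(n^{2/3})$ in both steps.

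\textbf{Step 2.} Adjusting $D_c$ is a single $Insert$ or $Delete$ into a dynamic array, which by Lemma~\ref{darrayL} costs $O(\sqrt{n}) = O(n^{1/2}) = O(n^{2/3})$, and setting $E[i]$ is $O(1)$. \textbf{Step 4.} For the frequent-color bookkeeping in $F$, I would note there are $O(n^{1/3})$ endpoints and hence $O(n^{1/3})$ BSTs, each of size at most $n^{2/3}$. Inserting or deleting the entry for $c$ in every node of $F$ (the became/ceased-to-be-frequent cases), or incrementing $c$'s stored frequency in every prefix BST $F[R]$ with $R>i$ (the stayed-frequent case), touches $O(n^{1/3})$ trees at $O(\log n)$ cost each, giving $O(n^{1/3}\log n) = O(n^{2/3})$.

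The main obstacle I expect is Steps~1 and~3: the naive reading ``count how many times $c$ occurs in each span'' suggests iterating over all $O(n^{2/3})$ spans unconditionally, which is already $O(n^{2/3})$ and fine, but one must be careful that the work is genuinely bounded by $n^{2/3}$ and not inflated by having to re-scan $D_c$ per span. I would therefore argue that for a fixed left endpoint $L$, a single left-to-right pass over the at-most-$n^{1/3}$ occurrences of $c$ recorded in $D_c$ yields the occurrence count of $c$ in $A[L:R]$ for every right endpoint $R$ simultaneously, so the whole update over all $O(n^{1/3})$ choices of $L$ costs $O(n^{1/3}) \times O(n^{1/3}) = O(n^{2/3})$. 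The subtlety to handle cleanly is the boundary case flagged before the algorithm, where $c$'s first occurrence is created or its last occurrence is destroyed: there $D_c$ must be allocated or freed and the $B$ arrays adjusted so they reflect only colors actually present, but since allocation/deallocation of a length-$O(n^{1/3})$ dynamic array and the associated counting are each $O(n^{1/3})$ per endpoint, these cases do not change the asymptotic bound. Summing the four steps gives $O(n^{2/3})$ overall.
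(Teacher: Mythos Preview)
Your proposal is correct and follows essentially the same route as the paper: fix a left endpoint $L$, sweep the at-most-$n^{1/3}$ entries of $D_c$ to obtain the count of $c$ in $A[L:R]$ for every right endpoint $R$, and repeat over all $O(n^{1/3})$ left endpoints to update all $B$ arrays in $O(n^{2/3})$ total; Step~2 is $O(\sqrt{n})$ by Lemma~\ref{darrayL}; Step~4 is $O(n^{1/3}\log n)$ by touching each of the $O(n^{1/3})$ BSTs once. One small point you gloss over that the paper makes explicit: in the ``just became frequent'' subcase of Step~4 you must actually compute the prefix frequencies of $c$ to insert into each $F[R]$, but since $c$ has only $O(n^{1/3})$ occurrences at that moment this is another $O(n^{1/3})$ and does not affect the bound.
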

\begin{proof}
If $c$ is an infrequent color prior to the update, step 1 removes its contribution to all $B$ arrays. Counting the frequency of color $c$ in each interval can be done via $O(n^{1/3})$ searches through $D_c$, each taking  $O(n^{1/3})$ time. We first start at the beginning, finding its frequency for all right endpoints with left endpoint fixed, then move the left endpoint to the next endpoint and repeat, etc. Step 1 in total takes $O(n^{2/3})$ time.

Adding or removing $i$ from $D_c$ takes $O(\sqrt{n})$ time, as given in Lemma \ref{darrayL}. Adjusting $E[i]$ can be done during this operation, so in total step 2 takes $O(\sqrt{n})$ time.
The analysis of step 3 is identical to step 1.
In step 4, adding, deleting, or modifying the entry of color $c$ in all/ some nodes of $F$ takes $O(n^{1/3} \log n)$ time, since each node stores the list of frequent colors and their frequency counts in a BST. If color $c$ just became frequent, it only occurs $O(n^{1/3})$ times in $A$, and thus counting its frequency from the beginning to each endpoint can be done in $O(n^{1/3})$ time. In total, step 4 takes at most $O(n^{1/3} \log n)$ time.

After the completion of Algorithm \ref{update}, $B$ arrays, dynamic array $D_c$, array $E$, and binary search tree $F$ has been updated to reflect the current state of array $A$. Since all steps execute in no more than $O(n^{2/3})$ time, Algorithm \ref{update} is correct and runs in $O(n^{2/3})$ time.
\end{proof}

\section{Range Least Frequent Query, Allowing Zero}
\label{leastfreq}

To answer range least frequent queries, we require the use of one more auxiliary data structure:
an array $C_{L, R}$ for all pairs of endpoints $L, R$, indexed from $1$ to $n^{1/3}$. 
At index $C_{L, R}[i]$ we store a list of all colors that occur $i$ times in $A$ such that the smallest span enclosing all occurrences is $[L, R]$.

Since each infrequent color is represented exactly once in the $C_{L, R}$ lists and there are $O(n^{2/3} \cdot n^{1/3}) = O(n)$ total indices present, this data structure takes linear space. It can also be initialized in linear time, and we can modify the update procedure of Algorithm \ref{update} to update $C_{L, R}$ at the same time as $B_{L, R}$ for no additional time cost.

It is additionally worth noting that since the Range Least Frequent Query will require $\tilde{O}(n^{2/3})$ time, the use of dynamic arrays for data structures $D_c$ is not necessary for this section. Instead, we can use binary search trees, augmented to support lookup by index in $O(\log n)$ time, or Dietz' data structure \cite{D89}. For the best time complexity, we will use augmented binary search trees to count occurrences of colors in a specific range and an augmented dynamic linear-space van Emde Boas 
 tree to count occurrences of colors between endpoints. The van Emde Boas tree stores a single node for any endpoint $R$ that a color appears in, which keeps the number of occurrences of that color from the beginning to endpoint $R$. The van Emde Boas tree can be updated in $O(n^{1/3} \log \log n)$ time upon insertion or removal and via predecessor/ successor queries, can support counting the number of occurrences of any color between any two endpoints in $O(\log \log n)$ time.


\begin{figure}
\begin{algorithm}[H]
\caption{Range Least Frequent Query in $A[l:r]$, Allowing Zero}
\label{freq}
\begin{flushleft}
Let $L$ and $R$ be the first and last endpoints in $[l, r]$, respectively.
\begin{enumerate}
  \item Check the frequency of every frequent color in $A[l:r]$ via $D_c$. Let $f$ be the lowest frequency found so far.
  \item Find the set of all infrequent colors that occur in $A[l:L-1] \cup A[R+1:r]$; call it $U$. Count the frequencies of all colors of $U$ in range $A[l:r]$ and update $f$ with the lowest frequency found so far.
  \item Compute $B'_{L, R}$, the array $B_{L, R}$ updated to erase the contribution of all colors in $U$.
  \item If the smallest positive-valued index of $B'_{L, R}$ is less than $f$, update $f$ and check if any list $C_{L', R'}[f]$, $[L', R'] \subseteq [L, R]$, is non-empty. If so, return a color from the appropriate list. Otherwise, binary search from $R$ to the last endpoint of $A$ in the following way:
  \begin{enumerate}
    \item Let $R'$ be a value in the middle of the search range. If $B'_{L, R'}[f] < B'_{L, R}[f]$, let $R'$ be the new upper bound; otherwise, continue the search in the half of the range after $R'$.
    \item When the search range is two consecutive endpoints, we search the range for a color that occurs $f$ times in $A[L:R]$ and return its identity.
    \item If the condition in a. is never satisfied, we must repeat a binary search on the other side, with an initial search range of the beginning of $A$ to $L$.
  \end{enumerate}
\end{enumerate}
\end{flushleft}
\end{algorithm}
\end{figure}

\begin{theorem}
Algorithm \ref{freq} finds the least frequent element of $A[l:r]$, allowing zero, in $O(n^{2/3}\log n \log \log n)$ time.
\end{theorem}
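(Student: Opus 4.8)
The plan is to prove correctness by a case analysis on the type and location of the least frequent color, and then to bound the running time step by step, with the step~4 binary search as the dominant cost. First I would observe that $L$ and $R$ are computable in $O(1)$ time by rounding $l$ up and $r$ down to the nearest endpoints, and that every color of $A$ falls into exactly one of four classes relative to the query: (i) frequent colors; (ii) infrequent colors occurring in the boundary $A[l:L-1]\cup A[R+1:r]$; (iii) infrequent colors occurring in $A[L:R]$ but not in the boundary; and (iv) colors absent from $A[l:r]$ entirely (frequency zero). The correctness argument then reduces to showing that the minimum frequency over each class is considered.

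For class (i) I would note that step~1 computes the exact frequency in $A[l:r]$ of each of the at most $n^{2/3}$ frequent colors. For class (ii), step~2 enumerates the $O(n^{2/3})$ boundary positions, collects the distinct infrequent colors $U$, and computes each of their frequencies in $A[l:r]$ directly. The heart of the argument is class (iii): I would show that after erasing the contribution of $U$ in step~3, the array $B'_{L,R}$ counts precisely the infrequent colors whose every occurrence in $A[l:r]$ lies inside $A[L:R]$, so that for such a color its frequency in $A[l:r]$ equals its frequency in $A[L:R]$; hence the smallest positive index of $B'_{L,R}$ is exactly the minimum frequency attained by a class-(iii) color. I would also treat class (iv) separately, since a frequency-zero color is not recorded in $B'_{L,R}$, $U$, or the frequent-color list; this requires an auxiliary check (e.g.\ comparing the number of distinct colors present in $A[l:r]$, recoverable from $\sum_i B_{L,R}[i]$, $|U|$, and the frequent colors, against the global color count) to decide whether to report $0$.

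The main obstacle, both for correctness and for the running time, is step~4, where given that the minimum frequency $f$ is realized by a class-(iii) color we must actually exhibit such a color. I would argue the $C$-list check first: if some $C_{L',R'}[f]$ with $[L',R']\subseteq[L:R]$ is nonempty, its color has all $f$ of its occurrences inside $A[L:R]\subseteq A[l:r]$ and hence frequency exactly $f$ in $A[l:r]$, so returning it is correct. When this check fails, every class-(iii) color of frequency $f$ has an occurrence strictly outside $A[L:R]$, either left of $L$ or right of $R$, and the two binary searches are meant to localize such an occurrence to a single segment, which is then scanned and validated with question~\eqref{q3}. The delicate point I would need to nail down is that the quantity $B'_{L,R'}[f]$ is \emph{not} obviously monotone as $R'$ grows---a color of frequency $f-1$ in $A[L:R]$ with an occurrence beyond $r$ can \emph{enter} the count while a target color \emph{leaves} it---so I would have to verify that the first endpoint at which $B'_{L,R'}[f]$ drops below $B'_{L,R}[f]$ still isolates a segment containing the next occurrence of a genuine target color, and that the final scan rejects spurious candidates (those that reappear in the boundary) by directly recomputing their frequency in $A[l:r]$.

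Finally I would assemble the running time. Steps~1 and 2 cost $O(n^{2/3}\log n)$ using the augmented binary search trees for in-range counting; step~3 and each evaluation of $B'_{L,R'}[f]$ cost $O(n^{2/3}\log\log n)$, since erasing the $U$-contribution requires one between-endpoint frequency lookup per color of $U$ through the van Emde Boas structure. As the binary search performs $O(\log n)$ such evaluations, it contributes $O(n^{2/3}\log n\log\log n)$, which dominates the remaining $C$-list scan over the $O(n^{2/3})$ subintervals of $[L:R]$ and the final $O(n^{2/3})$-element segment scan. Summing gives the claimed $O(n^{2/3}\log n\log\log n)$ bound.
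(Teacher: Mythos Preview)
Your overall decomposition into classes and your running-time bookkeeping match the paper's proof closely. There are, however, two places where you diverge unnecessarily and, in the second case, miss the key idea.

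First, your class~(iv) is not a separate case. The array $B_{L,R}$ is indexed from $0$, so $B'_{L,R}[0]$ already counts the infrequent colors not in $U$ that have frequency $0$ in $A[L:R]$; since such a color is absent from the boundary and from $A[L:R]$, it has frequency $0$ in $A[l:r]$. Frequent colors of frequency zero are caught in step~1. No auxiliary global-count comparison is needed.

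Second, and more importantly, the monotonicity you flag as ``not obviously'' holding is the crux of the paper's argument, and it \emph{does} hold. The point is that $f$ is the smallest index with $B'_{L,R}[f]>0$, so $B'_{L,R}[j]=0$ for all $j<f$; equivalently, every infrequent color not in $U$ has frequency at least $f$ in $A[L:R]$ (including those with frequency $0$, since $B'_{L,R}[0]=0$ when $f>0$). Enlarging the range to $A[L:R']$ can only raise each color's frequency, so no color can \emph{enter} the count $B'_{L,R'}[f]$ from below, and $B'_{L,R'}[f]$ is genuinely non-increasing in $R'$. Your worry about ``a color of frequency $f-1$ in $A[L:R]$ with an occurrence beyond $r$'' cannot arise: there is no such color. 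Once you have this monotonicity, the binary search is straightforwardly correct and the final segment scan with question~\eqref{q3} against left endpoint $L$ identifies a valid witness without any need to ``reject spurious candidates.''
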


\begin{proof}
We can find a list of frequent colors in any node of the $F$ BST. Using an augmented binary search tree for $D_c$, step 1 can then be done in $O(n^{2/3} \log n)$ time. In step 2, since there are $O(n^{2/3})$ elements in $A[l:L-1] \cup A[R+1:r]$, we can find color set $U$ and complete step 2 similarly to step 1 in $O(n^{2/3} \log n)$ time.  Step 3 requires counting the frequency of each color of $U$ in range $A[L:R]$ and decrementing the corresponding index to make $B'_{L, R}$; thus it can also be done in $O(n^{2/3} \log \log n)$ time using the augmented van Emde Boas tree.

In step 4 we are looking for a least frequent element of $A[L:R]$ that does not occur in $A[l:L-1] \cup A[R+1:r]$. All colors of $A[l:L-1] \cup A[R+1:r]$ are represented in set $U$, found in step 3. We effectively erase the contribution of colors of $U$ via computing $B'_{L, R}$; therefore, we can proceed as if colors of $U$ do not exist in $A$. We will refer to $A'$ as array $A$ without any colors of $U$.

If the least frequent color in $A'[L:R]$ does not exist elsewhere in $A$, then its identity will be stored in a list $C_{L', R'}[f]$, $[L', R'] \subseteq [L, R]$. Note colors of $U$ need not be special-cased for this lookup, since by appearing in $A[l:L-1] \cup A[R+1:r]$, they will not be present in any of the searched lists. Otherwise, we know the least frequent color in $A'[L:R]$ must occur somewhere else in $A'$.

Now, amongst all colors in $A'$, we know frequency $f$ is minimal in range $A'[L:R]$. Therefore if we increase the range to $A'[L:R']$, the frequency of colors can only increase. For this property to hold, we must allow $f = 0$. 


In each iteration, the smallest positive-valued index of $B'_{L, R'}$ will be $f$ or greater and $B'_{L, R'}[f]$ will be no more than $B'_{L, R}[f]$. If it is less, we know one of the colors that occurred $f$ times in $A'[L:R]$ now occurs more than $f$ times in $A'[L:R']$. Therefore we may find it in the half of the search range before $R'$. If it is the same, we know none of the colors that occurred $f$ times in $A'[L:R]$ appear in the half of the search range before $R'$. Either way we decrease the search range by a factor of $2$. When the range represents two consecutive endpoints, we can search it for a color that occurs $f$ times in $A[l:r]$ in $O(n^{2/3} \log n)$ time.

However, if $R'$ is the end of the array and $B'_{L, R'}[f] = B'_{L, R}[f]$, then none of the colors that appear $f$ times in $A'[L:R]$ appear to the right of $R$ in $A'$. In this case, we can repeat the same binary search on the other side, decreasing a left endpoint $L'$ and checking the same condition. Since the least frequent color in $A'[L:R]$ must occur elsewhere in $A$, as it was not present in any of the lists $C_{L', R'}[f]$, the search on this side must identify a color that appears $f$ times in $A'[L:R]$.

The time complexity of step 4 can be analyzed as follows. Checking the lists $C_{L', R'}$ takes $O(n^{2/3})$ time, since there can be $O(n^{2/3})$ endpoints $[L', R'] \subseteq [L, R]$. The binary search is on endpoints, of which there are $O(n^{1/3})$. Thus, there are $O(\log(n^{1/3})) = O(\log n)$ iterations of the binary search, and in each iteration we must compute $B'_{L, R'}$ or $B'_{L', R}$. This computation takes $O(n^{2/3} \log \log n)$ time as in step 3. Therefore the binary search process takes $O(n^{2/3}\log n \log \log n)$ time. In total, step 4 takes $O(n^{2/3}\log n \log \log n)$ time.

For correctness, the least frequent element in $A[l:r]$ is either a frequent or infrequent color. If it is a frequent color, it is identified in step 1. If it is an infrequent color, we have two cases. Either the color occurs in $A[l:L-1] \cup A[R+1:r]$, and thus set $U$, or it does not occur in set $U$. Step 2 accounts for all infrequent colors in set $U$. Steps 3 and 4 account for the last case. By the above, these steps find an infrequent color in $A[L:R]$ that does not occur in $A[l:L-1] \cup A[R+1:r]$ in $O(n^{2/3}\log n \log \log n)$ time. Thus, Algorithm \ref{freq} is correct and finds the least frequent element of $A[l:r]$, allowing zero, in $O(n^{2/3}\log n \log \log n)$ time.
\end{proof}

\section{Range $k$-Frequency Query}
\label{givenfreq}

The previous two sections make use of a monotonicity property to find a color of given frequency in a range: for range mode, we know the frequency of the most frequent element can only decrease if the query range is decreased; furthermore, for range least frequent, we know the frequency of the least frequent element can only increase if the query range is increased. For this monotonicity condition to hold for least frequent elements, we must allow answering with an element of frequency zero. To force our answer to be an element that occurs in the query range, we must use an additional data structure that allows retrieval of colors by frequency in the $B_{L, R}$ arrays. To achieve $\tilde{O}(n)$ space, we cannot afford to store a list of colors at each index $B_{L, R}[i]$, and storing a single color at each index will run into issues during updates.
Instead, we will use the following data structure which is similar to randomized data structures in the dynamic streaming literature \cite{fra08}:

\begin{lemma}
\label{montecarloL}
There is a Monte Carlo data structure that occupies expected $O(\log^2 n)$ space and supports:
\begin{enumerate}
  \item $Insert(x)$: Insert element $x$ into the collection, in $O(\log n)$ expected time.
  \item $Delete(x)$: Delete element $x$ from the collection, in $O(\log n)$ expected time.
  \item $Retrieve()$: Return an element in the collection, in $O(1)$ expected time.
\end{enumerate}
The data structure requires the $Delete(x)$ operation is executed independently of the results of $Retrieve()$.
\end{lemma}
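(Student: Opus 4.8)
The plan is to build a randomized structure that supports insertion, deletion, and retrieval of an arbitrary element under the stated independence assumption, while keeping expected space and update time logarithmic. The governing design principle will be \emph{random sampling by level}: assign to each element a geometrically distributed level, so that an element lands at level $j$ with probability $2^{-(j+1)}$, and maintain a separate substructure at each of the $O(\log n)$ levels. Because a collection of size $m \le n$ has, in expectation, $O(1)$ elements surviving to level $\lceil \log m \rceil$, we can afford to store full information at the highest nonempty levels while storing only aggregate (hashed) information at the lower, more populous ones. The challenge is that we do not know $m$ at retrieval time, so the structure must let us locate, quickly and correctly, a level that contains exactly one element and then recover that element's identity.

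First I would fix the per-level representation. Each element is hashed by its level using an independently chosen pairwise-independent (or fully random, idealized) hash function, and at each level I maintain three fields computed incrementally under insertion and deletion: a count of the number of elements mapped to that level, the bitwise XOR (over the $O(\log n)$-bit encodings) of all such elements, and a fingerprint or checksum used to detect, with high probability, whether the count field has been corrupted by collisions. An $Insert(x)$ walks up from level $0$ to $x$'s assigned level, updating the count, XOR, and fingerprint fields in $O(1)$ time each, for $O(\log n)$ expected total work; $Delete(x)$ performs the inverse updates in the same time, and its correctness relies only on the element's level being a deterministic function of $x$, which is why deletions need not depend on query outcomes. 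The expected number of occupied levels is $O(\log n)$ and each level stores $O(1)$ words, giving the claimed expected $O(\log^2 n)$ space (with the extra logarithmic factor absorbed by the word size $O(\log n)$ bits needed to encode element identities and fingerprints).

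The key step is $Retrieve()$. I would scan the levels from the top down (or binary search on the highest occupied level, tracked as a side variable) to find a level whose count field equals exactly $1$; at such a level the XOR field directly yields the unique element stored there, and the fingerprint check certifies that the count of $1$ is not an artifact of an even number of colliding elements hashing together. Because at level $\lceil \log m \rceil$ the expected surviving population is $\Theta(1)$, a constant fraction of the time some level near the top is a singleton, so with an appropriate constant choice of level offset a single probe (or $O(1)$ expected probes) succeeds; the $O(1)$ expected retrieval time follows from this concentration together with the fact that if a probe fails we move down by one level and retry. This is where the independence hypothesis is essential: if deletions could adapt to prior $Retrieve()$ answers, an adversary could drive all occupied levels into collision states and defeat the singleton-detection guarantee.

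The hard part will be the probabilistic correctness of $Retrieve()$ rather than the bookkeeping. I would isolate two failure modes and bound each. The first is a \emph{false singleton}: two or more elements collide at a level whose count field happens to read $1$ because of how the count is maintained, or whose XOR coincidentally equals a valid element; the fingerprint/checksum field drives this probability to $n^{-c}$ per query so that a union bound over a polynomial number of queries still succeeds with high probability. The second is \emph{no usable singleton}: every occupied level is either empty or multiply occupied; I would show that over the window of levels near $\log m$ the probability that none is a singleton is bounded below $1$ by a constant (a standard Poissonization / second-moment argument on the level-population distribution), which both yields $O(1)$ expected probes and, amplified over independent hash functions run in parallel if necessary, pushes the per-query success probability to high probability. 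Tying these bounds together with the stated independence assumption and a union bound over the at-most-polynomial sequence of operations completes the proof; the remaining details are the routine verification that the count, XOR, and fingerprint updates are exact under insertion and deletion.
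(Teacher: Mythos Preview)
Your approach is essentially the paper's: assign each element a geometrically distributed level deterministically via a hash (so that $Delete(x)$ can undo $Insert(x)$), maintain at each level an exact integer counter and the XOR of the elements residing there, report the XOR at a level whose counter equals~$1$, and run $\Theta(\log n)$ independent copies to boost the per-query success probability to $1-n^{-c}$.

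Two small corrections. First, the fingerprint/checksum you introduce is unnecessary and the ``false singleton'' failure mode you worry about does not exist: the counter is an \emph{exact} integer, incremented on insert and decremented on delete, so a reading of~$1$ already certifies that exactly one element is present and the XOR recovers it exactly. Second, your space accounting is in bits, whereas the lemma (and the paper) counts words; the stated $O(\log^2 n)$ words comes from the $\Theta(\log n)$ parallel copies times the $O(\log n)$ expected maximum level per copy, so the amplification is not optional but is what makes both the space bound and the high-probability guarantee come out as claimed.
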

We leave the proof of Lemma \ref{montecarloL} and discussion of the Monte Carlo data structure to the Appendix \ref{montecarlo}.
With it, we can answer the general problem of finding an element of given frequency in a query range. The additional auxilliary data structures needed will be as follows:
\begin{enumerate}
  \item An array $G_{L, R}$ parallel to $B_{L, R}$. At index $G_{L, R}[i]$, we store the number of infrequent colors with frequency $i$ in $A[L:R]$, excluding colors that appear in segments immediately left of $L$ or right of $R$.
  \item An array $H_{L, R}$, parallel to $G_{L, R}$, so that at index $H_{L, R}[i]$, we store a collection of colors counted in $G_{L, R}[i]$ in the data structure of Lemma \ref{montecarloL}.
\end{enumerate}
The array $G_{L, R}$ is similar to the item (ii) stored in table $D$ of \cite{C15}. During preprocessing, $G$ arrays can be built similarly to $B$; however, when we fix left endpoint, we check all colors that occur in the segment to the left. We avoid counting such colors. Similarly, before we finalize the count for $G_{L, R}$, we move the right endpoint out as if to count the next range, keeping track of colors encountered. We subtract the frequency of such colors in $G_{L, R}$. Whenever we add to/ subtract from $G_{L, R}$, we can insert or delete the color from $H_{L, R}$.

As explained above, our space cost now becomes $\tilde{O}(n)$. 
Furthermore, our updates to $G$ and $H$ can be done alongside the update to $B$; however, since each insertion takes $O(\log n)$ time, the update procedure now takes $O(n^{2/3}\log n)$ time.
With this we have:

\begin{algorithm}[H]
\caption{Range $k$-Frequency Query in $A[l:r]$}
\label{rfq}
\begin{flushleft}
Let $L$ and $R$ be the first and last endpoints in $[l, r]$, respectively.
\begin{enumerate}
  \item For color $c$ at index $i$ in the segment left of $L$, check via $D_c$ to see if $i$ is the first index of color $c$ to appear in range $[l, r]$, or, if outside $[l, r]$, the next occurrence of color $c$ lies in range $[l, r]$. If so, ask question \eqref{q3} for the occurrence of color $c$ in $[l, r]$ with frequency $k$ and right endpoint $r$. If answered in the affirmative, return color $c$. Do the same, symmetrically, for colors in the segment right of $R$.
  \item For each frequent color not addressed in step 1, check its frequency in $A[L:R]$ via BSTs $F[R]$ and $F[L]$. If any occur with frequency $k$, return the color.
  \item If no color is found from step 1, check if $G_{L, R}[k] > 0$. If so, return $H_{L, R}[k].Retrieve()$. If not, return that no color has frequency $k$ in range $A[l:r]$.
\end{enumerate}
\end{flushleft}
\end{algorithm}

\begin{theorem}
\label{rfqT}
Algorithm \ref{rfq} returns an element of frequency $k$ in $A[l:r]$ or indicates no such element exists, assuming update operations have been executed independently of results of query operations, in $O(n^{2/3})$ time.
\end{theorem}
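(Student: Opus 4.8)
The plan is to prove correctness by a case analysis on the possible form of an answer and then bound the time step by step. Throughout assume $k \geq 1$, so that any color $c$ with frequency exactly $k$ in $A[l:r]$ has at least one occurrence in $[l,r]$. Such a $c$ falls into exactly one of three cases: (i) $c$ occurs in the segment immediately left of $L$ or immediately right of $R$; (ii) $c$ is frequent and occurs in neither of these two segments; (iii) $c$ is infrequent and occurs in neither of these two segments. I would first note that, for a color with an occurrence in $[l,r]$, case (i) is equivalent to ``$c$ is addressed in step 1,'' since scanning an adjacent segment reaches $c$ through either its first occurrence inside $[l,r]$ or, from a position before $l$, through the occurrence whose successor first enters $[l,r]$. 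The core of the proof is to show that steps 1, 2, and 3 decide cases (i), (ii), and (iii) respectively, and that the three cases are exhaustive.

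For case (i), I would argue that step 1 scans the \emph{entire} adjacent segment, not just the portion $A[l:L-1]$ (resp. $A[R+1:r]$) inside the query; this is essential because a color may touch the segment only at indices before $l$ (resp. after $r$) yet still occur in $A[L:R]$. For each color so reached, the canonical-occurrence test through $D_c$ fixes an occurrence lying in $[l,r]$, and question \eqref{q3} asked from it with right endpoint $r$ (resp. left endpoint $l$) decides, in $O(1)$ time by Lemma \ref{questions}, the true frequency of $c$ in all of $A[l:r]$, correctly combining boundary and $A[L:R]$ contributions. For case (ii), since $c$ appears in neither adjacent segment, its frequency in $A[l:r]$ equals its frequency in $A[L:R]$; step 2 obtains the latter by scanning $F[R]$ and $F[L]$ together and taking prefix differences. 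I would stress that the ``not addressed in step 1'' clause is exactly what certifies $c$ has no boundary occurrences, so that the $A[L:R]$ count is the correct one to compare against $k$.

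Case (iii) is the crux, and it rests on why $G_{L,R}$ excludes colors occurring in the adjacent segments. I would show this exclusion is complementary to step 1: an infrequent color of frequency $k$ in $A[l:r]$ that occurs in an adjacent segment is caught by step 1, whereas one that occurs in neither adjacent segment has identical frequency in $A[L:R]$ and in $A[l:r]$, hence is counted by $G_{L,R}[k]$. Thus $G_{L,R}[k] > 0$ iff such a color exists, and $H_{L,R}[k].Retrieve()$ returns one; here I invoke Lemma \ref{montecarloL}, observing that the theorem's hypothesis---updates independent of query results---is precisely the precondition that makes Retrieve return a genuine member of the collection. Exhaustiveness of the partition then yields the converse: if steps 1--3 all fail, no color has frequency $k$ in $A[l:r]$, so reporting that none exists is correct.

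For the running time, each adjacent segment holds $O(n^{2/3})$ elements and each contributes $O(1)$ work via $D_c$, $E$, and \eqref{q3}, so step 1 costs $O(n^{2/3})$; iterating $F[R]$ and $F[L]$ in tandem over the $O(n^{2/3})$ frequent colors costs $O(n^{2/3})$ for step 2; and step 3 is $O(1)$ expected using the constant-time lookup $G_{L,R}[k]$ and Retrieve from Lemma \ref{montecarloL}. The total is $O(n^{2/3})$. I expect the main obstacle to be the bookkeeping tying cases (i) and (iii) together: verifying that the full-segment scan of step 1 and the adjacent-segment exclusion baked into $G_{L,R}$ align exactly, so that no infrequent color is simultaneously dropped from $G_{L,R}$ and missed by step 1---which would produce a spurious ``none'' answer.
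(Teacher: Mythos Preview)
Your proposal is correct and follows essentially the same approach as the paper: a three-way case split into colors touching an adjacent segment (handled in step~1), frequent colors not touching an adjacent segment (step~2), and infrequent colors not touching an adjacent segment (step~3 via $G_{L,R}$ and $H_{L,R}$), together with the same $O(n^{2/3})$, $O(n^{2/3})$, and $O(1)$ time bounds. Your write-up is in fact more explicit than the paper's on why the full-segment scan in step~1 and the adjacent-segment exclusion in $G_{L,R}$ fit together without gaps, which is exactly the point the paper treats most briefly.
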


\begin{proof}
In step 1, we look at two full segments of $O(n^{2/3})$ total elements. 
For color $c$ at index $i$, if $i \in [l, r]$, we must determine if index $i$ is the first occurrence of color $c$ in $[l, r]$. 
Let $m = D_c.rank(E[i])$. 
Index $i$ is the first occurrence of color $c$ in $[l, r]$ if $D_c[m-1]$ is outside $[l, r]$. 
Similarly, if $i \notin [l, r]$, we can again define $m = D_c.rank(E[i])$, then check if $D_c[m+1]$ is in $[l, r]$. 
In any case, for any color that occurs in segments immediately left of $L$ or right of $R$, 
one of the indices will be the first outside $[l, r]$ or the first within $[l, r]$. 
Thus the frequency of the color will be checked in $A[l:r]$. 
Since we do a constant number of constant time operations for $O(n^{2/3})$ elements, step 1 takes $O(n^{2/3})$ time.

As in Algorithm \ref{modeq}, step 2 takes $O(n^{2/3})$ time. 
Step 3 takes $O(1)$ time. 
In any case, 
in step 1 we check all elements in segments immediately left of $L$ or right of $R$ to see if they occur $k$ times in $A[l:r]$. 
Furthermore, in steps 2 and 3, we check every frequent and infrequent color that occurs in $A[L:R]$ but not in segments immediately left of $L$ or right of $R$, via array $G$. 
Thus we have checked every color if it occurs $k$ times in $A[l:r]$. Since no step takes more than $O(n^{2/3})$ time, this proves Theorem \ref{rfqT}.
\end{proof}

Algorithm \ref{rfq} can be easily modified to return the least frequent element present in the query range with the same time complexity and independence assumption. It can also be modified to count the number of elements above, below, or at a given frequency, as well as only determine the frequency of the least frequent element. Since these queries do not ask for a color, arrays $H_{L, R}$ are not needed. This reduces the space cost to $O(n)$, update cost to $O(n^{2/3})$, and requires no independence assumption.

\clearpage


\appendix
\begin{center}
  \bfseries \sffamily \LARGE Appendix
\end{center}

\section{Modified Dynamic Array Data Structure}
\label{darray}

For the purposes of this paper, 
we will only need the two-level structure of \cite{GK99}. 
It should be possible to generalize the operations discussed for the two-level structure to the multi-level structure, if desired.

Suppose we wish to represent a dynamic array on $n$ elements.
We define a parameter $n_f=\Theta(n)$;
the value of $n_f$ changes as $n$ becomes too large or too small.
Moreover, we define $m$ to be $\lceil \sqrt{n_f} \rceil$. 
We will maintain $\lceil n / m \rceil$ circular queues $Q_i$ of exactly $m$ elements each, 
except possibly for the last queue. 
Every element in each queue has a place in a total order represented by the queues. 
We say all elements of $Q_i$ precede elements of $Q_{i+1}$, 
and within a queue $Q_j$, 
distance from the head of the queue determines the relative order of the elements.
Each circular queue $Q_i$ supports the following operations:
\begin{enumerate}
  \item Insert element $x$ at index $j$ of $Q_i$, then pop the last element off the queue and return it.
  \item Delete the element at index $j$ of $Q_i$, sliding elements after $j$ forward in the queue.
  \item Push element $x$ onto the front/ back of the queue, then pop the first/ last element off the queue and return it.
  \item Return the element at index $j$ of $Q_i$.
  \item Determine the rank of object $x$ in $Q_i$.
\end{enumerate}

Standard array-based implementations of circular queues support operations 1 and 2 in $O(m)$ time and operations 3 and 4 in $O(1)$ time. Operation 5 can be implemented in $O(1)$ time as follows. Underlying the circular queue is an array. For each object $x$, we store its position in that array. The rank of object $x$ is just the position of object $x$ in the array, minus the position of the head of the queue in the array, modulo $m$.

We can now prove Lemma \ref{darrayL} from Section \ref{setup}.

\begin{proof}[Proof of Lemma \ref{darrayL}]
 Since each circular queue has exactly $m$ elements, 
 retrieving/ setting the element at rank $i$ can be done via retrieving/ setting $Q_{i/m}[i \mod m]$. 
 The arithmetic takes constant time, 
 as does retrieving/ setting $Q_{i/m}[i \mod m]$, 
 so this operation takes $O(1)$ time.
 
 Insertion at rank $i$ requires inserting into $Q_{i/m}$ at index $(i \mod m)$. 
 This pops the last element off $Q_{i/m}$, 
 which will then be pushed onto the following queue, 
 and so on. 
 The insertion into $Q_{i/m}$ takes $O(m) = O(\sqrt{n_f}) = O(\sqrt n)$ time. 
 The cascading insert-at-head, pop-from-tail operations take $O(1)$ time each and must be executed no more than $n / m$ times. 
 Thus insertion takes $O(n / m) = O(n / \sqrt{n_f}) = O(\sqrt{n})$ time.
 
 Deletion at rank $i$ is similar to insertion. 
 We delete index $(i \mod m)$ from $Q_{i/m}$. 
 We then remove the head of $Q_{n / m}$, and push it to the previous queue, and so on, until the vacant position at the end of queue $Q_{i/m}$ is filled. 
 Deletion from $Q_{i/m}$ takes $O(m)$ time, and there are no more than $n/m$ cascading operations that each take $O(1)$ time. 
 In total, deletion takes $O(m + n/m) = O(\sqrt{n})$ time.
 
 The rank of a given pointer $ptr$ can be determined by the rank function in the queue in which it resides. 
 Let $r$ be its rank. We can then return $r + im$, where $i$ is the index of the queue in which the object $ptr$ points to resides.
 
The data structure must be resized when $n$ becomes too large or too small with respect to $n_f$. 
We can build the above data structure in $O(n)$ time and therefore we can afford to rebuild from scratch when either condition occurs to achieve $O(\sqrt{n})$ amortized time insertion and deletion. By starting the rebuild at a fast pace when we approach the threshold on $n_f$, the amortized bound can be made worst-case.
\end{proof}




\section{A $\polylog n$ Space Monte Carlo Data Structure}
\label{montecarlo}

The idea of the data structure will be to sample at densities $1/2^i$ for $i$ up to about $\log n$. No matter the size of the data structure, there is a constant probability that the top density has only one element in it. We repeat the data structure with $c \lceil\log n\rceil$ independent copies to achieve a high probability that at any point in time, we may sample one element from the data structure.

To make the $Delete(x)$ operation functional, we need to determine at which densities element $x$ resides, as well as have the guarantee that $x$ currently exists in the collection. We must use the same random bits we used upon insert so we remove $x$ only from the correct densities. We can use the same model as universal families of hash functions to make the insert and delete functions dependent on the element, so behavior is deterministic, but analysis of expected results can be made over the choice of hash functions.

Note that independence of results of $Retrieve()$ and the $Delete(x)$ function is necessary; otherwise, we may repeatedly delete the element returned by $Retrieve()$. If this occurs, the elements are not sampled according to the desired exponential distribution, since elements at top densities have been specifically removed.

Our analysis will be similar to the cutset data structure of \cite{KKM13}. We can now prove Lemma \ref{montecarloL}.

\begin{proof}[Proof of Lemma \ref{montecarloL}]
When we insert $x$ into the data structure, we choose a number $l$ so that $l = 1$ with probability $1/2$, $l = 2$ with probability $1/4$, and so on, so that $l = i$ with probability $1/2^i$. We refer to $l$ as the \textit{level} of element $x$. As explained above, we make the choice of $l$ deterministic to element $x$ so that we can determine the level of $x$ if we delete $x$. With a universal family of hash functions, this can be accomplished by taking the hash of $x$ in binary and using this as the random bit sequence. In this sense, our analysis beyond this point will reason with the expected behavior over the choice of hash functions.

Let $l_{max}$ denote the maximum level currently represented in the data structure. We will maintain a mask for every level $1, \ldots, l_{max}$, as well as a counter at each mask. When we insert $x$ into the data structure, we XOR its value with the mask at its level and increase the counter. If this causes $l_{max}$ to increase, we add the necessary masks and counters to accomplish the task. When we execute $Delete(x)$, we find the level of $x$ and again XOR its value with all masks at or below its level, this time decreasing the appropriate counters and possibly decreasing $l_{max}$. The $Retrieve()$ function works as follows. We check level $l_{max}$ to see if the counter is at $1$. If so, we return the element represented in the mask at level $l_{max}$. If not, we report \texttt{failure}.

Let $n$ be an upper bound on the number of elements the collection will need to represent at any point in time. To achieve a high probability result, we will maintain $c \lceil\log n\rceil$ copies of the above data structure. Insert and delete will be done on each copy independently. The $Retrieve()$ function will only need to be executed until a single copy does not report \texttt{failure}. We analyze the probability of a single copy reporting failure as follows.

Let $k$ be the number of elements currently represented in the collection. We can ignore all elements that were previously inserted and then deleted because, assuming independence of $Delete(x)$ and $Retrieve()$, this will not affect the resulting distribution. We can determine the chance of failure on $Retrieve()$ by analyzing a single probability: the probability the counter at level $\lfloor \log k \rfloor + 1$ is equal to $1$.

This is the probability that exactly one out of the $k$ elements has level $\lfloor \log k \rfloor + 1$. Since the probability a particular element has level $\lfloor \log k \rfloor + 1$ is $1/2^{\lfloor \log k \rfloor + 1}$, the probability exactly one out of the $k$ elements has level $\lfloor \log k \rfloor + 1$ is
\[
{k \choose 1}\left(\frac{1}{2^{\lfloor \log k \rfloor + 1}}\right)\left(1-\frac{1}{2^{\lfloor \log k \rfloor + 1}}\right)^{k-1} > \frac{1}{2}\left(1-\frac{1}{k}\right)^{k-1}.
\]
The right hand side approaches $1/2e$ and is above that value for all $k \geq 1$. Therefore the probability the counter at level $\lfloor \log k \rfloor + 1$ is equal to $1$ is at least $1/2e$ for all values of $k$.

We can now reason about the success probability of $Retrieve()$. We can condition on the value of $l_{max}$. Clearly, $l_{max} \geq \lfloor \log k \rfloor + 1$ with probability at least $1/2e$, since the event the counter at level $\lfloor \log k \rfloor + 1$ is equal to $1$ is included in this probability. Further, if $l_{max} > \lfloor \log k \rfloor + 1$, the probability the counter at $l_{max}$ is equal to $1$ must be at least $1/2e$, since it is less likely for an element to have a higher level. Therefore the conditional probability the counter at $l_{max}$ is equal to $1$ given that $l_{max}$ is greater than $\lfloor \log k \rfloor + 1$ is at least $1/2e$. We can then conclude that with probability at least $1/4e^2$, the counter at $l_{max}$ is equal to $1$, though with more calculation we could certainly achieve a better constant.

By maintaining $c \lceil\log n\rceil$ copies of the data structure, the probability of failure in all of them is then
\[
\left(1-\frac{1}{4e^2}\right)^{c \lceil\log n\rceil} \leq n^{c\log(1-\frac{1}{4e^2})} \leq 21n^{-c}.
\]
Therefore, for the right choice of $c$ and by union bound, we can ensure a polynomial-length sequence of $Insert(x)$, $Delete(x)$, and $Retrieve()$ has arbitrarily low probability of failure.

To analyze space complexity, observe that the space complexity is the sum of the $l_{max}$ variable for each copy of the data structure. Since each copy is independent and by linearity of expectation, the expected space complexity is $c \lceil\log n\rceil \mathbf{E}(l_{max})$. The value $l_{max}$ is the maximum of exponential variables and it can be seen that $\mathbf{E}(l_{max})$ is $O(\log n)$. Therefore the space complexity is $O(\log^2 n)$.

The time complexity of $Insert(x)$ and $Delete(x)$ is $O(1)$ + the number of additional masks/ counters inserted/ deleted. The expected number of masks/ counters inserted is no more than
\[
\sum_{i=1}^\infty i/2^i = 2
\]
therefore $Insert(x)$ and $Delete(x)$ function in constant time per copy, for $O(\log n)$ time overall.

Since the $Retrieve()$ function works with constant probability on each structure, in expectation we need only a constant number of attempts to have success. Therefore $Retrieve()$ takes $O(1)$ time. This proves Lemma \ref{montecarloL}.
\end{proof}

\section{Insert/ Delete Operations}
\label{insertdelete}

To allow for efficient insertion and deletion into $A$, instead of storing array $A$ directly, we will split $A$ into the $O(n^{1/3})$ sections between endpoints and maintain each section as its own array. When created, each section will be twice as large as necessary to provide space for insertion of elements. We will refer to the elements that reside in the extra space of each section as ``empty'' elements.

As mentioned in Section \ref{prelim}, we will associate two properties with each position in $A$: the index, which we refer to as the absolute position when all array sections are arranged in order, including the ``empty'' elements; and the rank, which, when the position is occupied by a color, denotes the number of non-empty elements that precede the position in the total order.

By $A[i]$ we refer to the $i$th index of $A$, not rank; and similarly, we let $A[i:j]$ denote the subrange of $A$ from index $i$ to index $j$. As before, $A[L:R]$ denotes the subrange of $A$ from endpoint $L$ to endpoint $R$. Though the interface of all our operations in the preceding sections function on rank, the internals of all our data structures will function on index. By storing the number of non-empty elements in each array section of $A$, we may convert from rank to index and back in $O(n^{1/3})$ time, so this distinction does not affect the running times of the previous sections.

Although the number of nonempty elements of $A$ will change with each insert/ delete operation, we will assume the variable $n$, when used to distinguish between frequent and infrequent colors, is fixed, so that designations need not change during these operations. When the number of nonempty elements of $A$ changes to require rebuilds, we may assume $n$ changes during this rebuild. In this way, the value $n$ and the number of nonempty elements of $A$ will not differ by more than a constant factor.

We will assume the version of the proposed data structure that performs updates in $O(n^{2/3})$ time. In Section \ref{givenfreq}, some of the data structures discussed require $\tilde{O}(n^{2/3})$ time per update. In this case, the insert/ delete operations take the time complexity of the update operation, that is, $\tilde{O}(n^{2/3})$ time.

\begin{lemma}
\label{insertdeleteL}
If an insertion/ deletion does not require resizing of an array segment, we may perform this operation in $O(n^{2/3})$ time.
\end{lemma}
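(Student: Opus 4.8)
The plan is to exploit the sectioned representation so that an insertion or deletion which does not trigger a resize touches only a single array segment, and then to reuse the machinery of Algorithm \ref{update} to repair the frequency-based structures. First I would convert the given rank to an index and identify the segment $s$ containing the operation; since we store the number of non-empty elements in each segment, this conversion costs $O(n^{1/3})$ time. For an insertion, because no resize is needed there is already an empty slot inside $s$, so I would shift the non-empty elements of $s$ lying between the insertion point and the nearest empty slot by one position, opening a gap at the insertion point; deletion is symmetric, removing the color and sliding elements over to vacate a slot. As the segment's total capacity is unchanged, the absolute positions (indices) of slots in every other segment are untouched, so at most $O(n^{2/3})$ indices---those interior to $s$---can change.

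The key bookkeeping step is keeping $D_c$ and $E$ consistent with these index changes. When a non-empty element of color $c$ moves from slot index $p$ to slot index $p'$, its occurrence is recorded in $D_c$ at the location pointed to by its $E$-entry; I would carry that pointer along with the element (since $E$ is parallel to $A$ and is shifted together with it) and follow it to overwrite the stored index $p$ with $p'$. This is $O(1)$ work per moved element, hence $O(n^{2/3})$ in total. I would also observe that shifting elements within a single segment does not alter the multiset of colors inside that segment, and therefore leaves the frequency of every color in every span---and thus every entry of $B$, $C$, and $F$---unchanged; moreover the affected indices all move monotonically within $s$'s contiguous index range, so the relative order of occurrences of each color is preserved and no $D_c$ needs to be reordered. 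Consequently the shift alone requires no update to any frequency-based structure.

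It then remains to account for the single color genuinely added or removed. Treating the newly opened (respectively, newly vacated) slot as the addition (removal) of color $c$ at an index, I would run the corresponding half of Algorithm \ref{update}, which updates $B$, $C$, $F$, inserts or removes the new index in $D_c$ via Lemma \ref{darrayL}, and sets $E$, all in $O(n^{2/3})$ time. Finally I would increment or decrement the recorded non-empty count of segment $s$ so that later rank-to-index conversions stay correct. Summing the $O(n^{1/3})$ conversion, the $O(n^{2/3})$ index-repair pass, and the $O(n^{2/3})$ frequency update yields the claimed $O(n^{2/3})$ bound. The main obstacle I anticipate is exactly the second step: arguing that the $E$ pointers let us repair every changed index in $O(1)$ amortized time and that the shift disturbs no frequency structure, so that the only real frequency change is the single added or removed occurrence handled by Algorithm \ref{update}.
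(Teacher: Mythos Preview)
Your proposal is correct and follows essentially the same approach as the paper: locate the segment in $O(n^{1/3})$ time, shift at most $O(n^{2/3})$ elements within it while using the $E$ pointers to patch each moved element's entry in the appropriate $D_c$ in $O(1)$, and then invoke Algorithm~\ref{update} once for the single genuinely added or removed occurrence. Your write-up is in fact more explicit than the paper's on two points---that the intra-segment shift leaves every span's color multiset (and hence $B$, $C$, $F$) unchanged, and that the relative order of occurrences is preserved so no $D_c$ requires reordering---both of which the paper leaves implicit.
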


\begin{proof}
When we insert an element at rank $i$, we determine which array it falls in by checking the sizes of each array in $O(n^{1/3})$ time. We then find the appropriate index at which it should reside in the corresponding array and push elements back in the array in $O(n^{2/3})$ time. This requires updating the corresponding entries in $D_c$ for the elements pushed back. The pointers stored in $E$ allow these updates in $O(1)$ time per element moved.

At this point, index $i$ is an open space to which we may set $A[i] = c$. We then must call Algorithm \ref{update} to update the base data structures to the addition.

Deleting an element at rank $i$ can be done similarly; pushing elements forward in the corresponding array segment, updating $D_c$ for each color $c$ that is moved, and calling Algorithm \ref{update} to update the base data structures for the removal.
\end{proof}

When an array segment has no more space for element insertions, or has less than a quarter of its elements non-empty, we need to perform a more sophisticated operation to preserve functionality and the $O(n)$ space bound. The simple solution is to just rebuild the whole data structure. Since we can do this in $O(n^{4/3})$ time, by Lemma \ref{build}, and rebuilding will only be required after $O(n^{2/3})$ insertions/ updates, we can use this approach to get amortized $O(n^{4/3} / n^{2/3}) = O(n^{2/3})$ time per insertion and deletion.

It is possible to make this bound worst-case; however, the usual blackbox trick of rebuilding as we approach the threshold for needing a new data structure does not work. One reason for this is that we need to start rebuilding for each possible segment that can over/ under flow. The rebuilt structure has size $O(n)$, so maintaining this for each segment would take $O(n^{4/3})$ space. Instead, we can try to rebuild just what we need to either break a segment into two segments, or to absorb the segment into a surrounding segment. When the number of endpoints has doubled or halved, we can rebuild the entire data structure. In this case, since we need only store a single new copy, the blackbox technique applies.

\begin{theorem}
\label{insertdeleteWC}
We may perform insert and delete in $O(n^{2/3})$ worst-case time.
\end{theorem}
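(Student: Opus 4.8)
The plan is to make the amortized argument of the preceding paragraph worst-case by deamortizing the rebuild. The key difficulty identified in the text is that the usual blackbox trick—starting to rebuild a fresh copy as we approach the threshold—fails here because a rebuild of the whole structure costs $O(n^{4/3})$ time and $O(n)$ space, and we would in principle need to be prepared to rebuild for \emph{each} of the $O(n^{1/3})$ segments that might simultaneously be approaching overflow or underflow. Maintaining a half-built global copy for every segment would cost $O(n^{4/3})$ space, blowing the space bound. So the main obstacle—and the crux of the proof—is to localize the rebuilding work so that responding to a single segment's overflow/underflow only touches that segment and its immediate neighbor(s), at cost $O(n^{2/3})$ per affected segment, while keeping all the global auxiliary structures ($B_{L,R}$, $F$, $C_{L,R}$, and the $D_c$'s) consistent.

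First I would handle the two local events separately. When a segment overflows (its backing array is full), I split it into two segments by introducing a new endpoint in its interior; when a segment underflows (fewer than a quarter of its slots are non-empty), I merge it into an adjacent segment, deleting an endpoint. For a split, I must create the new endpoint's column of data: the $B_{L,R}$ and $C_{L,R}$ arrays indexed by the new endpoint against all existing endpoints, and the new node of $F$. Since there are $O(n^{1/3})$ endpoints and each $B$- or $C$-array has length $O(n^{1/3})$, and each $F$-node stores $O(n^{2/3})$ frequent-color counts, creating one new endpoint's worth of data costs $O(n^{2/3})$ by a single scan analogous to Lemma \ref{build} but restricted to one left/right endpoint. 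A merge is symmetric: I discard the obsolete endpoint's column, which is pure bookkeeping once the two array segments are concatenated. Each such local operation therefore runs in $O(n^{2/3})$ worst-case time and touches only $O(n^{1/3})$ words beyond the affected segment, so no prebuilt global copies are needed.

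The second step is to bound how far the endpoint count can drift and to trigger a global rebuild only when it doubles or halves. Since each insertion or deletion changes the non-empty count of a segment by one, and a segment is split only after $\Theta(n^{2/3})$ insertions accumulate in it (and merged only after a comparable number of deletions), the number of segments—hence endpoints—changes slowly. When the endpoint count has doubled or halved relative to the value of $n$ used to define the frequent/infrequent threshold, I rebuild the entire structure from scratch in $O(n^{4/3})$ time using Lemma \ref{build}, simultaneously resetting $n$; this also refreshes the frequent/infrequent designations. Crucially, between two consecutive global rebuilds there are $\Omega(n)$ update operations, so the global rebuild is amortized $O(n^{4/3}/n)=O(n^{1/3})$ per update, and here the blackbox deamortization \emph{does} apply: I maintain exactly one half-built global copy and advance it by $O(n^{1/3})$ work per update, using $O(n)$ extra space, making the global-rebuild cost worst-case $O(n^{1/3})$.

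Finally I would combine the pieces. Each insert/delete costs $O(n^{2/3})$ for the in-segment element shifting plus the call to Algorithm \ref{update} (by Lemma \ref{insertdeleteL}), plus at most one split-or-merge at $O(n^{2/3})$ worst-case, plus $O(n^{1/3})$ for advancing the deamortized global rebuild; the total is $O(n^{2/3})$ worst-case. The rank-to-index conversion needed to locate the target segment costs $O(n^{1/3})$ as noted in Appendix \ref{insertdelete} and is subsumed. The subtle point to verify carefully is that a single update can force at most one split or one merge—which holds because each update perturbs the occupancy of a single segment by one, and the split/merge thresholds are separated from the ideal occupancy by a constant factor of slack—so the per-operation charge of a local rebuild is genuinely $O(n^{2/3})$ and not a cascade. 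I expect this non-cascading guarantee, together with the localization of rebuild work to avoid the $O(n^{4/3})$ space trap, to be the heart of the argument.
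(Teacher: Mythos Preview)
Your overall plan---localize the split/merge and deamortize only a single global rebuild---is the right shape, but the claimed $O(n^{2/3})$ cost for a single split is wrong, and this is exactly the step that separates a worst-case bound from an amortized one.

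You assert that ``creating one new endpoint's worth of data costs $O(n^{2/3})$ by a single scan analogous to Lemma~\ref{build} but restricted to one left/right endpoint.'' The \emph{output size} of the new $B_{L,N}$ and $B_{N,R}$ arrays is indeed $O(n^{1/3})\cdot O(n^{1/3})=O(n^{2/3})$ words, but the time to compute them is not. A scan from the new endpoint $N$ to the far end of $A$, maintaining $B_{N,*}$ as in Lemma~\ref{build}, touches $\Theta(n)$ array positions; and building the family $\{B_{L,N}:L\le N\}$ requires, for every color occurring anywhere left of $N$, its frequency in $[L,N]$ for each of the $\Theta(n^{1/3})$ choices of $L$. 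Even if you start from the existing arrays $B_{L,R_1}$ with $R_1$ the adjacent old endpoint, you must still process every color appearing in the half-segment $(R_1,N]$ against every $L$, which is already $\Theta(n)$ work. The paper accordingly budgets $O(n^{4/3})$ time and $O(n^{2/3})$ work space for creating the new endpoint's $B$ column (and similarly for physically relocating elements via Algorithm~\ref{update}); the crucial constraint is that the work space, not the time, be $O(n^{2/3})$.

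Because a single split or merge is not $O(n^{2/3})$, the local rebuild must \emph{itself} be deamortized: as a segment approaches its threshold, one performs $O(n^{2/3})$ units of the pending split/merge per insert or delete into that segment, keeping the half-built $B$ column consistent under intervening updates. This is where the paper's extra wrinkle appears: a freshly created endpoint may immediately participate in another segment's impending split, so the incremental construction must also build $B$ arrays between pairs of \emph{prospective} endpoints. Your argument rests on ``at most one split or merge per update, executed instantly,'' so it never confronts this; without the per-segment deamortization and the prospective-endpoint bookkeeping, the worst-case bound does not follow.
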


\begin{proof}
Insertion/ deletion of elements without addressing resizing of a segment is handled in Lemma \ref{insertdeleteL}. When a segment either reaches full capacity or 1/4 capacity, we need to either merge the segment with a surrounding segment or split the segment into two segments. If merging the segments would result in a segment that would be above full capacity, we can instead move elements from the larger segment to the smaller segment. In any of the three scenarios, we need to achieve $O(n^{4/3})$ worst-case time and $O(n^{2/3})$ work space to be able to apply blackbox dynamization techniques to make the overall bound worst-case.

Any process of physically moving elements from one segment to another can be done similarly to as in Lemma \ref{insertdeleteL}. This takes $O(n^{2/3})$ time per move and updates $B$ arrays, $D_c$, $E$, and $F$ via Algorithm \ref{update}. Therefore, physically moving elements in any of the above three scenarios will require $O(n^{4/3})$ time and no more than $O(n^{2/3})$ work space. We now need consider the costs of adding and deleting endpoints.

When a segment is absorbed, the endpoint between the smaller segment and its neighbor will need to be removed. Doing so takes $O(\log n)$ time to modify $F$, and no more than $O(n^{1/3})$ time to delete any $B$ arrays that use this endpoint.

Similarly, when a segment is split, a new endpoint must be created. We can do this as follows. We need to populate the $B$ array for this new endpoint. There are only $O(n^{1/3})$ other end points and each $B$ array has $O(n^{1/3})$ values, so this takes $O(n^{2/3})$ work space for the structure we wish to keep. However, to create the new $B$ array, we must count each color in the corresponding ranges. If we count each color individually, via $D_c$, we can count all colors in the corresponding ranges efficiently. Since each infrequent color occurs at most $n^{1/3}$ times, and there are at most $n$ infrequent colors, this takes $O(n^{4/3})$ time and $O(n^{2/3})$ work space. 

Updating $F$ requires adding a new endpoint and counting frequent elements from the beginning of $A$ to this endpoint. We can use the counts from the preceding endpoint and add the occurrences of elements from the beginning of the segment to the position of the new endpoint. This will then take $O(n^{2/3})$ time and work space.

In Section \ref{leastfreq}, we describe a data structure that can ultimately be updated alongside $B$, but this structure may take more than $O(n^{2/3})$ space for some segments. This is okay because when the space cost of each segment is totaled, the structure takes $O(n)$ space overall. The additional data structures in Section \ref{givenfreq} can be updated alongside $B_{L, R}$.

Thus it is possible to create the new data structures we need in $O(n^{4/3})$ time and $O(n^{2/3})$ work space. To achieve worst-case insert/ delete operations, as we approach the threshold for needing to resize a segment, we rebuild the necessary structures at a rate of $O(n^{2/3})$ operations per insert/ delete in the corresponding segment. These partially-constructed data structures can be kept up to date alongside the base data structures when updates occur. If a threshold is reached in a segment, we swap the necessary structures to the global structure. One further point must also be addressed. If we create a new endpoint, then in the next iteration perform an update that creates another new endpoint, the new $B$ array for these two endpoints will not have been created. To fix this, we can build $B$ arrays to prospective endpoints as well as current endpoints during the building process. As each segment has only one prospective endpoint, this doesn't impact costs asymptotically.

Eventually, when the number of endpoints doubles or halves, we need to rebuild the whole structure to maintain that we have $O(n^{1/3})$ endpoints and $O(n^{2/3})$ elements between each endpoint. Since only one copy of the whole structure need be rebuilt, with the trick of rebuilding/ updating at twice the rate, this can be done in $O(n)$ extra space, and $O(n^{4/3} / n^{2/3}) = O(n^{2/3})$ extra time per operation.
\end{proof}

\bibliography{paper}


\end{document}